\documentclass[12pt,reqno,a4paper]{amsart}
\usepackage{amsmath,amssymb,amsfonts}
\usepackage[mathscr]{eucal}
\usepackage{amsthm}
\usepackage[arrow]{xy}
\usepackage{stmaryrd}
\usepackage[margin=1.2in]
{geometry}

\newtheorem{proposition}{Proposition}
\newtheorem{theorem}{Theorem}

\newtheorem{lemma}{Lemma}

\theoremstyle{definition}
\newtheorem{remark}{Remark}
\newtheorem{example}{Example}[section]
\newtheorem{definition}{Definition}

\newcommand{\ci}{C^\infty}

\newcommand{\C}{\mathcal}
\newcommand{\p}{\partial}
\renewcommand{\L}{L_\infty}
\newcommand{\F}[2]{\frac{#1}{#2}}

\renewcommand{\ker}{\mathrm{ker}}

\newcommand{\Vect}{\mathrm{Vect}}

\title{Homotopy Loday Algebras and Symplectic $2$-Manifolds}

\author{Matthew.~T.~Peddie}

\address{Department of Mathematics,  The Pennsylvania State University,
State College, 16801, U.S.}
\email{mtp27@psu.edu, matt.peddie11@gmail.com}

\keywords{Nonabelian derived bracket, Loday algebra, Leibniz algebra, homotopy algebra, Dorfman bracket, Courant bracket}
\subjclass[2010]{53D17, 53Z05, 58A50}

\begin{document}

\begin{abstract}
Using the technique of higher derived brackets developed by Voronov, we construct a homotopy Loday algebra in the sense of Ammar and Poncin  associated to any symplectic $2$-manifold. The algebra we obtain has a particularly nice structure, in that it accommodates the Dorfman bracket of a Courant algebroid as the binary operation in the hierarchy of operations, and the defect in the symmetry of each operation is measurable in a certain precise sense. We move to call such an algebra a homotopy Dorfman algebra, or a $D_\infty$-algebra, which leads to the construction of a homotopy Courant algebroid.
\end{abstract}

\maketitle

\section{Introduction}
The concept of a derived bracket was introduced by Koszul (unpublished) and formalised by Kosmann-Schwarzbach \cite{Kos96}, where given a graded Lie algebra $L$ and an odd derivation $D$ of this Lie algebra, a secondary bracket
	\begin{equation*}
	[a,b]_D := (-1)^a[Da,b],\quad a,b,\in L,
	\end{equation*}
may be defined from the original Lie bracket $[-,-]$ on $L$. This derived bracket satisfies a Jacobi identity if $D$ satisfies the non-trivial nilpotency condition  $D^2=0$, however it does not possess any symmetry properties in general, and so does not define a Lie bracket. The Lie algebra $L$ equipped with this bracket is an example of a Loday algebra, or a Leibniz algebra as they were originally named \cite{Lod93}. Since their introduction, derived brackets have appeared (or rather have been identified) in many areas of geometry and physics: the Schouten-Nijenhuis bracket of multivector fields \cite{Sch40}, the Koszul bracket on differential forms on a Poisson manifold \cite{Fuc82, Kos85}, odd and even Poisson brackets on (super)manifolds \cite{Vor02}, and the Dorfman bracket on sections of a Courant algebroid \cite{Roy99}, to mention only a few. An excellent survey of derived brackets can be found in the article \cite{Kos04} together with many more references.

Around the same time as the introduction of derived brackets, (strongly) homotopy Lie algebras, or $\L$-algebras, began appearing in physical literature in relation to gauge field theories. In the works \cite{Vor05a,Vor05b}, Voronov introduced a higher derived bracket construction for producing such $\L$-algebras. Starting with a Lie (super)algebra $L$ and an odd derivation $D$, higher order operations $[\cdots]_D$ on $n$ arguments were introduced on an abelian subalgebra $V$ of $L$ with the assistance of a projector $P$, via the formula
	\begin{equation*}
   [a_1,\ldots,a_n]_D:= P[\cdots[Da_1,a_2],\ldots,a_n],
    \end{equation*}
for $a_1,\ldots,a_n\in L$. The abelian condition imposed on $V$ ensures the symmetry of these operations, which were shown to endow $V$ with the structure of an $\L$-algebra if $D^2=0$.

Since the introduction of this higher derived bracket construction, extensions have appeared offering techniques for producing $\L$-operations when $V$ is no longer abelian \cite{Ban15}, and rephrasing the entire construction using homotopy data and the language of coalgebras. Interestingly, it wasn't until later that the notion of a homotopy Loday algebra was introduced  \cite{AP10}, whence  Uchnio \cite{Uch11} gave a construction of such homotopy Loday algebras (named homotopy Leibniz algebras) via derived brackets. This construction simply removes the projector and hence the abelian condition from Voronov's construction which removes the symmetry of the defined operations.

In this article we construct a specific class of homotopy Loday algebras which we call homotopy Dorfman algebras, or $D_\infty$-algebras. The reason for this is that the Dorfman bracket of a Courant algebroid sits naturally in this hierarchy of operations, which extend the Dorfman bracket in the same way that higher $\L$-brackets extend the binary Lie bracket. Though each one of these higher Dorfman brackets  is a Loday bracket, and so possesses no symmetry, the discrepancy in the symmetry is measurable by the introduction of a bilinear form, and so these operations can be seen to act on the space of sections of a pseudo-Euclidean vector bundle. The skew-symmetrisation of these operations produce higher Courant brackets, and the operations introduced here lead naturally to the construction of homotopy Courant algebroids which will be the subject of a forthcoming article. We mention that higher (homotopy) Courant brackets appear already in the literature \cite{Ber07} (see also \cite{KW15} where they are remarked upon) though defined differently, and without reference to an algebroid. We view our definition of the Courant brackets as natural, being skew-symmetrisations of the naturally defined Dorfman operations. Other concepts of higher Courant brackets are known, for instance in \cite{Zam12}, but whose setting does not admit a homotopy analogue.

Throughout this we work in the super category where it will be convenient to omit the prefix super; when we write vector space we refer to a super vector space and so on. As is usual, the $\mathbb{Z}_2$-grading will be called the parity, and denoted by $\tilde{a}$ when we wish to be explicit about the parity of $a$. In formula however, it will be enough to write $(-1)^a$ for instance, when we mean $(-1)$ raised to the parity of $a$.

\section{Set-Up and Main Statement}
Let $E\rightarrow M$ be a pseudo-Euclidean vector bundle over a (super)manifold $M$, and denote by $g:\Gamma(E)\times\Gamma(E)\rightarrow \mathbb{R}$ the symmetric non-degenerate bilinear form. As was shown in \cite{Roy02}, pseudo-Euclidean vector bundles are in one to one correspondence with non-negatively graded manifolds $\C{M}$ equipped with a weight $2$ symplectic form, or symplectic $2$-manifolds. The weight of the symplectic form forces the fibred structure
    \begin{equation*}
    \C{M}\rightarrow \Pi E\rightarrow M,
    \end{equation*}
of the symplectic $2$-manifold, where $\Pi E$ is the vector bundle $E$ with shifted parity in the fibres: $(\Gamma(\Pi E))_0 := \Gamma(E)_1$ and $(\Gamma(\Pi E))_1 := \Gamma(E)_0$ where $\Gamma(E) = \Gamma(E)_0\oplus \Gamma(E)_1$ under the $\mathbb{Z}_2$-grading. The algebra of functions on the graded manifold $\C{M}$, denoted by $\C{A}$, is itself naturally non-negatively graded, and decomposes $\C{A}=\oplus_{k\geq 0}\C{A}^k$ into the homogeneous weighted components. A natural example of such a manifold is the cotangent bundle $T^*\Pi E$ to the total space of $\Pi E$.

The weight $2$ symplectic form on $\C{M}$ gives rise to a non-degenerate weight $-2$ Poisson bracket which we will denote by $[-,-]$. If one fixes a linear connection $\nabla$ in $E$, the symplectic structure on $\C{M}$ coincides with that described in \cite{Rot91}, in which case the Poisson bracket carries the expressions
	\begin{gather*}
    [X,f]_\nabla = X(f),\qquad [X,u]_\nabla = \nabla_Xu,\\
    [u,v]_\nabla = (-1)^ug(u,v),\quad [X,Y]_\nabla = [X,Y]_c + R_\nabla (X,Y),\nonumber
    \end{gather*}
where $u,v\in\Gamma(E)$, $X,Y\in \Vect(M)$ are weight $2$ vector fields and $[X,Y]_c$ is the usual commutator of vector fields, and $R_\nabla$ is the curvature $2$-form of $\nabla$. A linear connection is not needed in general, and we will not refer to this again, however it is beneficial to view the above expressions for the bracket. The space of sections $\Gamma(E)$ may be naturally identified with the space of linear functions $\C{A}^1$ on $\Pi E$ via the odd isomorphism
    \begin{equation}\label{Section 1 Isomorphism}
	\chi:\Gamma(E)\rightarrow \C{A}^1,\quad u\mapsto \chi(u) = \chi_u,\qquad u\in\Gamma(E),
	\end{equation}
where $\tilde{\chi_u} = \tilde{u} + 1$, and $\chi$ identifies a section $u$ with a function on $\Pi E^*$, and then raises the indices via the metric $g$. This map relates the non-degenerate form $g$ on $\Gamma(E)$ to the Poisson bracket restricted to $\C{A}^1$ by
    \begin{equation}\label{Section 1 chi relating the form g to the bracket of functions}
    g(u,v) = (-1)^u[\chi_u,\chi_v],
    \end{equation}
from which it is clear that the space $\C{A}^0\oplus \C{A}^1$ forms a non-trivial Poisson subalgebra of $\C{A}$.
Define a projector $P$ on the algebra $\C{A}$ by
	\begin{equation}\label{Section 1 Projector P}
	P:\C{A}\rightarrow \C{A}^0\oplus \C{A}^1,\qquad P^2 = P,
	\end{equation}
singling out the terms in a function $f\in\C{A}$ of weight zero and weight one. The projector in eq. \eqref{Section 1 Projector P} satisfies the ``distributivity law''
	\begin{equation}\label{Section 1 Distr Law for projection}
	P[f,g] + [Pf,Pg] = P[Pf,g] + P[f,Pg],
	\end{equation}
for functions $f,g\in\C{A}$, which is not hard to obtain by expanding the functions in terms of their weight.

Now let $Q:\C{A}\rightarrow \C{A}$ be an odd derivation of the Poisson algebra $\C{A}$, i.e.
	\begin{equation}\label{Section D is a derivation of the bracket}
	Q[f,g] = [Qf,g] + (-1)^{f}[f,Qg].
	\end{equation}
Paralleling the definition given in \cite{Vor05a,Vor05b}, we define higher operations on $\Gamma(E)$ generated by the derivation $Q$ via the isomorphism in eq. \eqref{Section 1 Isomorphism}.

\begin{definition}
The $k$th higher derived bracket $d_k:\Gamma(E)^{\times k}\rightarrow \Gamma(E)$ is the multilinear operation defined by the higher derived bracket formula
	\begin{align}\label{Section 1 Definition Derived bracket formula}
	\chi(d_k(u_1,\ldots,u_k)) := &(-1)^{(k-1)u_1 + (k-2)u_2+\cdots+ u_{k-1}}P[\cdots[Q\chi_{u_1},\chi_{u_2}],\ldots,\chi_{u_k}] \\
	& - (-1)^{(k-1)u_1 + (k-2)u_2+\cdots +u_{k-1}}[P[\cdots[Q\chi_{u_1},\chi_{u_2}],\ldots,\chi_{u_{k-1}}],\chi_{u_k}],\nonumber
	\end{align}
where $u_1,\ldots,u_k\in\Gamma(E)$.
\end{definition}

These operations carry parity $k\!\mod 2$, and that we obtain an element of $\Gamma(E)$ upon the projection of these terms follows by expanding $Q$ in terms of the weight and comparing the degrees. One can see that the second term is strictly of weight $0$, and simply removes the weight $0$ terms from the first.
\begin{remark}
 Everything that follows may be defined for an arbitrary odd derivation of the Poisson algebra $\C{A}$. It is convenient however to suppose that $Q$ arises as an inner derivation $Q = [\theta,-]$ for $\theta\in\C{A}$, which is automatically compatible with the Poisson structure on $\C{A}$. We assume this for the remainder of the text, and fix $\theta\in\C{A}$ as an arbitrary odd inhomogeneous function defining $Q$. It will also be convenient to assume  that $P\theta = 0$, if only to simplify some of the later calculations.
\end{remark}
 For an arbitrary derivation we need to impose a secondary condition on $Q$, requiring that $Q$ preserves the subalgebra $\ker P$ (notice that this is indeed a subalgebra as a consequence of the weight). In which case we see that $Q$ satisfies the condition
	\begin{equation}\label{Section 1 Q preserves the subalgebra ker P}
	PQP = PQ.
	\end{equation}
However with our assumption that $Q = [\theta,-]$ is an inner derivation, together with the condition that $P\theta = 0$, this preservation condition is satisfied automatically, since $P\theta = 0$ guarantees that the weight of $\theta$ is greater that $1$, and hence that the weight of $Q$ is non-negative. Therefore if $f\in\ker P$, the weight of $Q$ forces $Qf\in\ker P$ also.

These higher operations are neither symmetric nor skew-symmetric, providing the difference between these and other higher operations constructed via higher derived brackets in the literature. The symmetric component of these operations can however be ``measured'' by the bilinear form $g$ on $\Gamma(E)$.

\begin{proposition}\label{Section 1 Proposition symmetry}
The higher operations $d_k:\Gamma(E)^{\times k}\rightarrow \Gamma(E)$ satisfy the symmetry condition
    \begin{align*}
    & \chi\left(d_k(u_1,\ldots,u_{i},u_{i+1},\ldots,u_k)+(-1)^{u_{i}u_{i+1}}d_k(u_1,\ldots,u_{i+1},u_{i},\ldots,u_k)\right)\\
    & \quad= (-1)^{\varepsilon+(u_i+u_{i+1})(u_{i+2}+\cdots+u_k + (k-i-1))}P[\cdots[Q\chi_{u_1},\chi_{u_2}],\ldots,\chi_{u_k}],[\chi_{u_i},\chi_{u_{i+1}}]]\\
    & \quad\quad - (-1)^{\varepsilon+(u_i+u_{i+1})(u_{i+2}+\cdots+u_k + (k-i-1))}W[\cdots[Q\chi_{u_1},\chi_{u_2}],\ldots,\chi_{u_k}],[\chi_{u_i},\chi_{u_{i+1}}]],
    \end{align*}
where $\varepsilon = (k-1)u_1 + (k-2)u_2+\cdots+ u_{k-1}$, and $W:\C{A}\rightarrow \C{A}^0$ is a secondary projector to the trivial Poisson subalgebra $\C{A}^0$.
\end{proposition}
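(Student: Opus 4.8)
The plan is to substitute the defining formula \eqref{Section 1 Definition Derived bracket formula} into the left-hand side and then to simplify the graded-symmetric combination in two moves: a single application of the graded Jacobi identity at the transposed slots, followed by repeated use of the fact that the weight-grading annihilates Poisson brackets of low total weight. (Throughout I read the nested bracket on the right-hand side of the proposition as the nest over $u_1,\ldots,u_k$ with the $i$th and $(i+1)$th entries \emph{deleted}, those having been absorbed into $[\chi_{u_i},\chi_{u_{i+1}}]$.) The first point to record is that the transposition affects the sign prefactor: interchanging $u_i$ and $u_{i+1}$ in $\varepsilon=(k-1)u_1+(k-2)u_2+\cdots+u_{k-1}$ changes $\varepsilon$ by $u_i+u_{i+1}\bmod 2$. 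Hence, after factoring out $(-1)^{\varepsilon}$, the two contributions to (say) the first term of \eqref{Section 1 Definition Derived bracket formula} assemble into $P$ applied to a nest in which $\chi_{u_i},\chi_{u_{i+1}}$ occur in the form
\[
[[Y,\chi_{u_i}],\chi_{u_{i+1}}]+(-1)^{u_iu_{i+1}+u_i+u_{i+1}}[[Y,\chi_{u_{i+1}}],\chi_{u_i}],
\]
where $Y$ is the part of the nest built from $Q\chi_{u_1}$ and $\chi_{u_2},\ldots,\chi_{u_{i-1}}$. Because $\chi$ shifts parity one has $(-1)^{u_iu_{i+1}+u_i+u_{i+1}}=-(-1)^{\widetilde{\chi_{u_i}}\,\widetilde{\chi_{u_{i+1}}}}$, so this is precisely the combination on which the (even) graded Jacobi identity collapses to $[Y,[\chi_{u_i},\chi_{u_{i+1}}]]$. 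This is really the content of the proposition: once the change in $\varepsilon$ is accounted for, it is the graded-\emph{symmetric}, not the skew, combination of the $d_k$ that telescopes under Jacobi; and the same collapse happens inside the $P$ of the second term of \eqref{Section 1 Definition Derived bracket formula}, by linearity of $P$.

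Next I would set $c:=[\chi_{u_i},\chi_{u_{i+1}}]\in\C{A}^0$, which equals $(-1)^{u_i}g(u_i,u_{i+1})$ by \eqref{Section 1 chi relating the form g to the bracket of functions}. After the collapse $c$ occupies the $i$th slot of the surviving nest, whose outer entries $\chi_{u_{i+2}},\ldots,\chi_{u_k}$ all have weight $1$. Since the Poisson bracket has weight $-2$ and $\C{A}$ is non-negatively graded, $[\C{A}^0,\C{A}^1]\subseteq\C{A}^{-1}=0$, and therefore $c$ graded-commutes outward past each such entry: $[[\,\cdot\,,c],\chi_{u_j}]=(-1)^{\tilde c\,\widetilde{\chi_{u_j}}}[[\,\cdot\,,\chi_{u_j}],c]$. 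Iterating moves $c$ to the outermost bracket, at the cost of the sign $(-1)^{(u_i+u_{i+1})(u_{i+2}+\cdots+u_k+(k-i-1))}$ — the $k-i-1$ counting the entries passed and the parity shift of each $\chi$ contributing the summand $1$ — which is exactly the extra exponent in the statement. Thus the first term of \eqref{Section 1 Definition Derived bracket formula} contributes $(-1)^{\varepsilon+(u_i+u_{i+1})(u_{i+2}+\cdots+u_k+(k-i-1))}P\big[\,[\cdots[Q\chi_{u_1},\chi_{u_2}],\ldots,\chi_{u_k}]\,,\,c\,\big]$, the nest having $u_i,u_{i+1}$ removed.

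It then remains to treat the second term of \eqref{Section 1 Definition Derived bracket formula}, $-[P[\cdots],\chi_{u_k}]$, and to identify $W$. Running the same two moves inside it — collapse, then migration of $c$ outward past $\chi_{u_{i+2}},\ldots,\chi_{u_{k-1}}$, all inside the $P$ — yields $-[P[\,\cdot\,,c],\chi_{u_k}]$ for the nest with $u_i,u_{i+1}$ and $\chi_{u_k}$ removed. Using $[c,\chi_{u_k}]=0$ once more, and that $[\C{A}^0,\C{A}^1]=0$ forces $[Pf,\chi_{u_k}]$ to depend only on the weight-one component of $f$, one checks that the map so produced is nothing but the projection of $\C{A}$ onto its weight-zero summand $\C{A}^0$, a well-defined idempotent $W:\C{A}\to\C{A}^0$; the further sign $(-1)^{(u_i+u_{i+1})(u_k+1)}$ appearing when $c$ is commuted past $\chi_{u_k}$ is exactly what makes this contribution carry the same exponent as the first, giving $-(-1)^{\varepsilon+(u_i+u_{i+1})(u_{i+2}+\cdots+u_k+(k-i-1))}W\big[\,[\cdots[Q\chi_{u_1},\chi_{u_2}],\ldots,\chi_{u_k}]\,,\,c\,\big]$. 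The cases $i+1<k$ and $i+1=k$ run in parallel; in the latter the ``$P$ in the middle'' of the second term is unpicked via the distributivity law \eqref{Section 1 Distr Law for projection} (together with $P\theta=0$, equivalently \eqref{Section 1 Q preserves the subalgebra ker P}), after which $[P[Y,\chi_{u_{k-1}}],\chi_{u_k}]$ is seen to involve only the weight-two part of $Y$. Summing the two contributions gives the asserted identity, with $W$ the secondary projector onto $\C{A}^0$.

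I expect the only real difficulty to be bookkeeping rather than structure: tracking the Koszul signs generated by the parity shift of $\chi$ so that it is indeed the symmetric — not the skew — combination of the $d_k$ that telescopes, and then checking that the weight-zero correction coming from the second half of \eqref{Section 1 Definition Derived bracket formula} is genuinely $W$ applied to $[\,\cdot\,,c]$, independently of the presentation of the nest. That last point is the one place where \eqref{Section 1 Distr Law for projection} and the hypothesis $P\theta=0$ are used substantively rather than cosmetically.
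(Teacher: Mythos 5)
Your argument is correct and is exactly the route the paper indicates but omits ("applications of the Jacobi identity whilst keeping careful track of the signs"): the parity shift of $\chi$ turns the graded-symmetric combination of the $d_k$ into the skew combination that Jacobi collapses to $[Y,[\chi_{u_i},\chi_{u_{i+1}}]]$, the vanishing of $[\C{A}^0,\C{A}^1]$ migrates $[\chi_{u_i},\chi_{u_{i+1}}]$ outward with precisely the stated sign, and the observation $[Pf,g]=W[f,g]$ for $g\in\C{A}^1$ identifies the second term with the projector $W$. Your reading of the right-hand nest (with the $i$th and $(i{+}1)$th entries deleted) is the intended one, as confirmed by the defect-map formula that follows the proposition; the appeal to the distributivity law in the $i+1=k$ case is superfluous but harmless.
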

Notice that, due to degree reasons,  $[Pf,g] = W[f,g]$ for any function $f\in\C{A}$ and linear function $g\in\C{A}^1$, which allows us to rewrite the second term in eq. \eqref{Section 1 Definition Derived bracket formula} with the projector $P$ inside in terms of $W$. The proof of proposition \ref{Section 1 Proposition symmetry} follows from applications of the Jacobi identity whilst keeping careful track of the signs; we omit it.

\begin{definition}
Define the $k$th defect map $D_k:\Gamma(E)^{\times k-2}\times\ci(M)\rightarrow \Gamma(E)$ by the formula
	\begin{align*}
	\chi\left(D_k(u_1,\ldots,u_{k-2}|f)\right) & := (-1)^{u_1(k-3) + \cdots + u_{k-3}}P[\cdots[Q\chi_{u_1},\chi_{u_2}],\ldots,\chi_{u_{k-2}}],f]\\
    & \quad\quad - (-1)^{u_1(k-3) + \cdots + u_{k-3}}W[\cdots[Q\chi_{u_1},\chi_{u_2}],\ldots,\chi_{u_k}],f],
	\end{align*}
for sections $u_1,\ldots,u_{k-2}\in\Gamma(E)$ and a function $f\in\ci(M)$.
\end{definition}

The defect maps may be used (by definition) to measure the discrepancy in the skew-symmetry of the higher brackets. The additional signs in proposition \ref{Section 1 Proposition symmetry} combine with those in $\epsilon$, leaving only that necessary to define the form $g$ in terms of $\chi$:
	\begin{align*}
	&d_k(u_1,\ldots,u_{i},u_{i+1},\ldots,u_k) +(-1)^{u_{i}u_{i+1}}d_k(u_1,\ldots,u_{i+1},u_{i},\ldots,u_k)\\
	&\quad\quad = (-1)^{(u_i+u_{i+1})(u_{i+2} + \cdots + u_k)}D_k(u_1,\ldots,\hat{u}_i,\hat{u}_{i+1},\ldots,u_k | g(u_i,u_{i+1})),
	\end{align*}
where the hat terms denote omission from the sequence of sections.
In fact the map for $k=2$ is well-known and was introduced in the work \cite{LWX97}.
\begin{example}
Suppose that $\theta\in\C{A}^3$ is an odd weight $3$ function on $\C{M}$. Then $Q = [\theta,-]$ is of weight $1$, and the only non-trivial derived bracket is $d_2$, where
	\begin{equation*}
	\chi\left(d_2(u,v)\right) = (-1)^u[Q\chi_u,\chi_v],
	\end{equation*}
for $u,v\in\Gamma(E)$. This bracket coincides with the standard Dorfman bracket on the sections of a pseudo-Euclidean vector bundle (of which the Courant bracket is a skew-symmetrisation of). The map $D_2:\ci(M)\rightarrow\Gamma(E)$ measuring the defect in the symmetry of the Dorfman bracket satisfies the relation
	\begin{equation*}
	d_2(u,v) + (-1)^{uv}d_2(v,u) = D_2g(u,v),
	\end{equation*}
defined in \cite{LWX97} via the de Rham differential on $M$ and the anchor map of a Courant algebroid.
\end{example}
The example shows that the Dorfman bracket sits as the binary bracket in this hierarchy, and following that for $\L$-algebras, we will call the operations defined in eq. \eqref{Section 1 Definition Derived bracket formula} the higher derived Dorfman brackets on the space of sections $\Gamma(E)$. We now show that these Dorfman brackets satisfy the higher Jacobi identities of a homotopy Loday algebra.

A homotopy Loday algebra, or a homotopy Leibniz algebra, is a relaxed $\L$-algebra, requiring a sequence of higher order brackets which satisfy higher Jacobi identities, but which no longer satisfy any (skew-)symmetry conditions. The category of homotopy Loday algebras is investigated in the work \cite{AP10}, and such algebras were explicitly constructed on an arbitrary Lie algebra $L$ in \cite{Uch11} via derived brackets without the use of projectors.

\begin{definition}\label{Section 1 Definition sh Leibniz algebra}
A vector space $V$ together with a collection of multilinear operations $l_i:V^{\times i}\rightarrow V$ carrying parity $k\!\mod 2$ determines a homotopy Loday algebra when
	\begin{gather}\label{Section 1 nth Loday identity}
	J_n(v_1,\ldots,v_n) = \sum_{i+j=n+1}\sum_{k\geq j}\sum_{\sigma\in Sh(k-j,j-1)} K(\sigma)(-1)^{(k+1)(j+1)}(-1)^{j(v_{\sigma(1)} + \cdots +v_{\sigma(k-j)})}\\
	\qquad l_i(v_{\sigma(1)},\ldots,v_{\sigma(k-j)},l_j(v_{\sigma(k-j+1)},\ldots,v_{\sigma(k-1)},v_k),v_{k+1},\ldots,v_{i+j-1})=0,\nonumber
	\end{gather}
where $Sh(k-j,j-1)$ is the set of $(k-j,j-1)$ unshuffles, and $K(\sigma) = sgn(\sigma)\kappa(\sigma)$ is the product of the parity of the permutation with the Koszul sign $\kappa(\sigma)$ obtained from permuting the elements.
\end{definition}
Our main statement then is as follows.

\begin{theorem}
The $n$th higher Loday identity eq. \eqref{Section 1 nth Loday identity} for the higher Dorfman brackets defined from the derivation $Q$ is equivalent to the $n$th derived Dorfman bracket of $Q^2$. In particular, if $Q$ satisfies $Q^2=0$, then the higher Dorfman brackets endow $\Gamma(E)$ with a homotopy Loday algebra structure.
\end{theorem}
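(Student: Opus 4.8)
The plan is to prove the stronger ``master identity'': for all $v_1,\ldots,v_n\in\Gamma(E)$ one has $J_n(v_1,\ldots,v_n)=\pm\,\check{d}_n(v_1,\ldots,v_n)$ in $\Gamma(E)$, where $\check{d}_n$ denotes the operation given by formula \eqref{Section 1 Definition Derived bracket formula} with the odd derivation $Q=[\theta,-]$ replaced by the even derivation $Q^2=\tfrac12[[\theta,\theta],-]$; this makes sense in the same framework because $P\theta=0$ forces $P[\theta,\theta]=0$ by weight, and the overall sign will emerge from the bookkeeping. Granting this, the second assertion is immediate, since $Q^2=0$ means $[\theta,\theta]=0$, which kills $\check{d}_n$. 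The argument follows the template of Voronov's computation for $\L$-algebras \cite{Vor05a,Vor05b} in its Loday-type variant \cite{Uch11}, carried out directly inside the Poisson algebra $\C{A}$; the genuinely new feature is that here $P$ projects onto the \emph{non-abelian} subalgebra $\C{A}^0\oplus\C{A}^1$ (it is $2$-step nilpotent: $[\C{A}^1,\C{A}^1]\subseteq\C{A}^0$ and $\C{A}^0$ is central), whereas Voronov's projector lands in an abelian subalgebra.

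First I would put the operations into a form suited to nesting. Writing $P^1:=P-W$ for the projection onto $\C{A}^1$, the identity $[Pf,g]=W[f,g]$ valid for $g\in\C{A}^1$ (degree reasons, as remarked after Proposition \ref{Section 1 Proposition symmetry}) turns formula \eqref{Section 1 Definition Derived bracket formula} into $\chi(d_k(u_1,\ldots,u_k))=(-1)^{\varepsilon}P^1[\cdots[Q\chi_{u_1},\chi_{u_2}],\ldots,\chi_{u_k}]$, the weight-$1$ component of $(-1)^{\varepsilon}$ times the $k$-fold nested Poisson bracket; the role of the $W$-subtraction in \eqref{Section 1 Definition Derived bracket formula} is precisely to guarantee that $d_k$ honestly returns a section. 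I would then expand each summand of \eqref{Section 1 nth Loday identity} for the $d_k$'s: substituting the inner operation $\chi(d_j(\ldots))=(-1)^{\varepsilon'}P^1\Psi$ into the appropriate slot of the outer nested bracket and opening $P^1=P-W$, one uses the distributivity law \eqref{Section 1 Distr Law for projection} (to pull $P$'s through brackets, the correction terms $[P(-),P(-)]$ now being nonzero, in contrast to the abelian case) together with the Jacobi identity for $[-,-]$ and the Leibniz rule \eqref{Section D is a derivation of the bracket} for $Q=[\theta,-]$. The outcome should be a rewriting of each summand, up to sign, in terms of iterated Poisson brackets in which $\theta$ appears once, in the first slot of either the inner or the outer nesting.

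The heart of the proof is then the summation: feeding this rewriting into the left-hand side of \eqref{Section 1 nth Loday identity} and normalising all terms — again via Jacobi and Leibniz — into one of two shapes: (i) an iterated bracket with a single $\theta$ in a fixed position, or (ii) an iterated bracket where $\theta$ has been commuted past another $\theta$, producing $[\theta,\theta]$ (equivalently $Q^2$) in the first slot. The shuffle combinatorics together with the signs $(-1)^{(k+1)(j+1)}$, $(-1)^{j(v_{\sigma(1)}+\cdots+v_{\sigma(k-j)})}$ and $K(\sigma)$ built into the Loday identity are designed so that all shape-(i) terms cancel in pairs, while the shape-(ii) terms collect — after one more pass through \eqref{Section 1 Distr Law for projection} to restore the projected form — into precisely $\chi(\check{d}_n(v_1,\ldots,v_n))$. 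Applying $\chi^{-1}$ gives the master identity, and hence the theorem.

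I expect the essential obstacle to be the sign and shuffle bookkeeping rather than any conceptual gap: one must check that the Koszul signs from the super grading, the structural signs in \eqref{Section 1 nth Loday identity}, the signs $\varepsilon$ from \eqref{Section 1 Definition Derived bracket formula}, and the additional signs incurred every time $P$ is pulled through a bracket via \eqref{Section 1 Distr Law for projection} or $[\theta,-]$ is distributed via \eqref{Section D is a derivation of the bracket}, all conspire so that the shape-(i) terms genuinely cancel and the shape-(ii) terms genuinely reassemble into \eqref{Section 1 Definition Derived bracket formula}. The one point requiring conceptual care, distinguishing this from Voronov's original construction, is the appearance of the correction terms $[P(-),P(-)]$ from the distributivity law: these vanish automatically when the target subalgebra is abelian but not here, so one must track them through the whole computation and verify that they are accounted for in the final reassembly into \eqref{Section 1 Definition Derived bracket formula}; the weight grading (with $P\theta=0$, hence $PQP=PQ$, eq.\ \eqref{Section 1 Q preserves the subalgebra ker P}) is what keeps every intermediate expression in the components where $P$, $P^1$ and $W$ behave as stated.
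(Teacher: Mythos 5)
Your high-level picture of what must happen is right, and your observation that the two terms of eq.~\eqref{Section 1 Definition Derived bracket formula} combine into a single projection $P^1=P-W$ onto $\C{A}^1$ (via $[Pf,g]=W[f,g]$ for $g\in\C{A}^1$) is a clean reformulation that the paper does not exploit explicitly. You have also correctly isolated the two mechanisms --- terms containing a single $\theta$ must cancel among themselves, terms in which $\theta$ meets $\theta$ must reassemble into the derived bracket of $Q^2$ --- and correctly flagged the non-abelian correction terms $[P(-),P(-)]$ from eq.~\eqref{Section 1 Distr Law for projection} as the feature distinguishing this setting from Voronov's abelian one.

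The gap is that the proof is never carried out: for this theorem the entire mathematical content \emph{is} the cancellation you defer to ``sign and shuffle bookkeeping.'' Two concrete points. First, the assertion that the single-$\theta$ terms ``cancel in pairs'' is not how the cancellation actually works: in the paper's computation these terms are eliminated only after repeated applications of the Jacobi identity (Lemmas~\ref{Section Proof Lemma shuffling 2 xi} and~\ref{Section 2 lemma for q squared}), hockey-stick and Pascal binomial identities, index shifts, and the symmetry of the Poisson bracket; nothing cancels pairwise, and the non-abelian corrections of the form $[P\Phi^{a}_Q,P\Phi^{b}_Q]$ survive several stages of the argument before being absorbed via the distributivity law. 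Second, you attack the identity with $n$ arbitrary arguments, which makes the unshuffle sum with its Koszul signs essentially unmanageable. The paper's key device --- which your plan omits --- is polarisation: it suffices to verify $J_n(u,\ldots,u)$ for a single \emph{even} section $u$ (the general identity being recovered as a coefficient of $\lambda^1\cdots\lambda^n$ after substituting $u=\lambda^i u_i$), after which the unshuffle sum collapses to the signed counts $\C{C}(k,j)$ computed in the appendix; these are themselves delicately parity-dependent, with one of the four parity classes vanishing identically, which is what removes the even-$j$, odd-$k$ terms from the computation. Without either this reduction or an executed version of your direct computation, the proposal remains a plausible plan rather than a proof.
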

As a consequence of the classical Dorfman bracket nesting as the binary bracket of this hierarchy of operations, we call the space $(\Gamma(E),g,d_1,d_2,d_3,\dots)$ a $D_\infty$-algebra, or a homotopy Dorfman algebra. Such algebras should be defined as homotopy Loday algebras whose symmetry defect in the operations can be measured by some bilinear form. It remains to show that such operations satisfy the higher Loday identities in eq. \eqref{Section 1 nth Loday identity}. It is instructive to view the first few independently; for $n=1$,
	\begin{equation*}
	\chi_{d^2_1(u)} = PQ(PQ\chi_u) = PQ^2\chi_u,
	\end{equation*}
using eq. \eqref{Section 1 Q preserves the subalgebra ker P}. The case for $n=2$ is more interesting and illustrates the manipulations involved in the proof of the general case. The second order identity is
	\begin{equation*}
	J_2(u,v) = d_2(d_1u,v) + (-1)^ud_2(u,d_1v) - d_1d_2(u,v),
	\end{equation*}
the derivation property for the first bracket $d_1$ treated as a differential. Then
	\begin{align*}
	\chi(J_2(u,v)) & = (-1)^{u+1}PQ(P[Q\chi_u,\chi_v] - [PQ\chi_u,\chi_v])\\
	& \quad +(-1)^{u+1}(P[QPQ\chi_u,\chi_v] - [PQPQ\chi_u,\chi_v])\\
    & \quad \quad + (P[Q\chi_u,PQ\chi_v] - [PQ\chi_u,PQ\chi_v]),
	\end{align*}
from which, if one uses the properties in eq. \eqref{Section 1 Distr Law for projection} and eq. \eqref{Section D is a derivation of the bracket} for the projector and the derivation, we obtain	
	\begin{equation*}
	= (-1)^{u+1}P[Q^2\chi_u,\chi_v] + (-1)^u[PQ^2\chi_u,\chi_v] = \chi(-d_2(u,v)_{Q^2}).
	\end{equation*}
We defer the proof of the general Loday identity until the end of the article.

In the same way that the Courant bracket is a skew-symmetrisation of the Dorfman bracket, we can define higher Courant brackets on $\Gamma(E)$.
\begin{definition}
Define the $k$th Courant bracket on $\Gamma(E)$ by the skew-symmetrisation of the $k$th Dorfman bracket $d_k$:
	\begin{equation*}
	[u_1,\ldots,u_k]_C = \F{1}{n!}\sum_{\sigma\in S_n}K(\sigma)d_k(u_{\sigma(1)},\ldots,u_{\sigma(n)}).
	\end{equation*}
\end{definition}
Paralleling the binary case, these higher identities do not satisfy the higher Jacobi identities, but rather hold up to some defect term. One can expect to obtain a sequence of $E$-valued forms $T_3\in\Gamma(\wedge^3 E)$, $T_4\in\Gamma(\wedge^4 E)$, and so on as for the usual Courant case. In the work \cite{RW98} the tensor $T_3$ was used to produce an $\L$-algebra associated to any Courant algebroid. It would be interesting to see whether these higher tensors could be incorporated to produce such a structure associated to a homotopy Courant algebra.

\section{Dorfman Brackets for $P_\infty$-Manifolds}
For a manifold $M$, the vector bundle $\C{T} = TM\oplus T^*M$ is pseudo-Euclidean with the non-degenerate form given by the canonical pairing of $1$-forms and vector fields on $M$. The associated symplectic $2$-manifold $\C{M}$ is the cotangent bundle $T^*\Pi TM$. Introduce local coordinates $x^a,p_a,\xi^a,\pi_a$ on $\C{M}$ with weights $0,2,1,1$ respectively, inherited from the double vector bundle structure of $T^*\Pi TM$. The weight acts as a total weight of a bigrading, also specified by the double vector bundle structure, with the weight given by the Euler vector field
	\begin{equation*}
	w = 2p_a\F{\p}{\p p_a} + \xi^a\F{\p}{\p\xi^a} + \pi_a\F{\p}{\p\pi_a},
	\end{equation*}
and the bigrading by the two vector fields
	\begin{equation*}
	\varepsilon_1 = p_a\F{\p}{\p p_a} + \xi^a\F{\p}{\p\xi^a},\qquad \varepsilon_2 = p_a\F{\p}{\p p_a} + \pi_a\F{\p}{\p\pi_a}.
	\end{equation*}
The algebra of functions $\C{A}$ decomposes with respect to both the weight and the bigrading.

It is well-known that this structure is sufficient to give rise to the Dorfman bracket generated by the de Rham differential $d$ viewed as a homological vector field on $\Pi TM$. This vector field has the associated Hamiltonian function $\Delta = \xi^ap_a$ of weight $3$ and bidegree $2,1$, and defines the Dorfman bracket on $\Gamma(\C{T})$, 
	\begin{equation*}
    d_2(X+\eta,Y+\tau) = [X,Y]_c + \C{L}_X\tau - (-1)^{\eta Y}\imath_Y\eta,
    \end{equation*}
for vector fields $X,Y$ and $1$-forms $\eta,\tau$ on $M$. Due to the weight, the function $\Delta$ cannot generate any higher structure. However if $M$ carries homotopy Poisson structure, or $P_\infty$-structure, then naturally defined higher Dorfman brackets arise.

A homotopy Poisson structure on a manifold is given by introducing an even multivector field $P\in\ci(\Pi T^*M)$ such that $[P,P]_{SN} = 0$ under the Schouten-Nijenhuis bracket. The Poisson brackets on $M$ are given by derived brackets
	\begin{equation*}
	\{f_1,\ldots,f_k\}:=\pm [\cdots[P,f_1]_{SN},\dots, f_k]_{SN}\big|_{M},
	\end{equation*}
whose higher Jacobi identities are equivalent to the Poisson-nilpotency of $P$. If $P$ is a bivector field then this recovers the usual binary Poisson bracket, which gives rise to the Koszul bracket of differential forms, an odd binary bracket in the algebra $\ci(\Pi TM)$. In the same sense, a homotopy Poisson structure generates a sequence of higher Koszul brackets giving the tangent bundle $TM$ the structure of an $\L$-bialgebroid \cite{KV08}.

To obtain the higher Koszul structure, notice that $P$ defines a homological vector field $d_P$ on $\Pi T^*M$ which in turn defines a linear Hamiltonian function on $T^*\Pi T^*M$. It remains to apply the canonical isomorphism of double vector bundles $T^*\Pi T^*M \cong T^*\Pi TM$, which identifies an odd Poisson-commuting function $K_P$ on $T^*\Pi TM$. This function $K_P$ defines the higher Koszul brackets on $\Pi TM$ via derived brackets, and in general is inhomogeneous in weight, though it satisfies $\C{L}_{\varepsilon_1}K_P = K_P$ by construction, and so carries a fixed $\varepsilon_1$ grading of $1$. One can consider the sum of the functions $K_P$ and $\Delta$, and use this to define a higher Dorfman structure, analogous to the Dorfman structure arising from a Lie bialgebroid.

Consider the space of sections $\Gamma(\wedge T^*M)$ endowed with the de Rham differential $d$. On $\Gamma(\wedge T^*M)$ we can define the usual insertion and Lie derivative operations $\imath_X$ and $\C{L}_X$ for a vector field $X$ on $M$, either directly via their action on tensors, or in terms of the Poisson algebra $\C{A}$ and the map in eq. \eqref{Section 1 Isomorphism}. Set
    \begin{gather*}
    \chi(\imath_X\eta):=(-1)^X[\chi_X,\chi_\eta],\qquad \chi(d\eta):= [\Delta,\chi_\eta],\\
    \chi(\C{L}_X\eta):=(-1)^X[[\Delta,\chi_X],\chi_\eta],
    \end{gather*}
for a vector field $X$, a form $\eta\in\Gamma(\wedge^k T^*M)$, and where we implicitly extend $\chi$ to $\Gamma(\wedge^k T^*M)$ by
    \begin{equation*}
    \chi(\eta) = \chi(\eta_1\wedge\ldots\wedge \eta_k):= (-1)^{\eta_1(k-1) + \eta_2(k-2)+\cdots + \eta_{k-1}}\chi_{\eta_1}\cdots\chi_{\eta_k}.
    \end{equation*}
We draw attention to the sign which arises when considering compositions of insertion operators under the map eq. \eqref{Section 1 Isomorphism}
    \begin{equation}\label{Section 3 Sign appearing from grouped insertion operators}
    \chi(\imath_{X_1}\ldots\imath_{X_n}\eta):= (-1)^{nX_1 +(n-1)X_2+\cdots + X_{n}}[\chi_{X_1},\ldots,[\chi_{X_n},\chi_\eta]\cdots].
    \end{equation}
We make similar definitions for sections of $\wedge TM$, except remark on the minus sign which appears from our choice of $\Pi TM$ as the base of the symplectic $2$-manifold,
    \begin{gather*}
    \chi(\imath_\eta X):=-(-1)^\eta[\chi_\eta,\chi_X],\qquad \chi(d_PX):= -[K_P,\chi_X],\\
    \chi(\C{L}_\eta X):=(-1)^\eta[[K_P,\chi_\eta],\chi_X].
    \end{gather*}

\begin{proposition}
Let $M$ be a homotopy Poisson manifold, and let $d_P$ denote the differential of multivector fields and $[\cdots]_{K_P}$ denote the higher Koszul brackets on differential forms. Then $\Gamma(\C{T})$ is naturally a homotopy Dorfman algebra, with brackets given by the formula
	\begin{align*}
	d_1(X) = -d^0_PX,\quad & \quad d_2(X,Y) = [X,Y]_c,\\
    d_2(X,\eta) = \C{L}_X\eta - (-1)^{X\eta}\imath_\eta d^1_PX,\quad & \quad  d_2(\eta,X) = \C{L}^*_\eta X - (-1)^{X\eta}\imath_Xd\eta,\\
    d_n(\eta_1,\ldots,\eta_n) = [\eta_1,&\ldots,\eta_n]_{K_P}, \mbox{ for all } n\geq 1,
    \end{align*}
and
    \begin{gather*}
    d_n(\eta_1,\ldots,\eta_{i-1},X,\eta_i,\ldots,\eta_n) = (-1)^{X(\eta_{i+1} + \cdots + \eta_n) + i}(2-i)\imath_{\eta_1}\cdots\imath_{\eta_n}d^n_PX\\
    - (-1)^{X(\eta_{i+1}+\cdots+\eta_n) + n+i}\sum^{i-1}_{j=1}(-1)^{\eta_j(\eta_{j+1}+\cdots+\eta_n) +j}\imath_{\eta_1}\cdots\widehat{\imath_{\eta_j}}\cdots\imath_{\eta_n}\C{L}^n_{\eta_j}X,
    \end{gather*}
for vector fields $X,Y$ and $1$-forms $\eta_1,\ldots,\eta_n$ on $M$, and where $d^n_PX$ is understood as the weight $n$ component of the differential $d_P = d_P^0 + d_P^1 + \cdots$, and similarly for $\C{L}_n$.
\end{proposition}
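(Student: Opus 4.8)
The strategy is to realise the desired homotopy Dorfman structure as a direct application of the main theorem, and then to read off the explicit formulae by unwinding the derived bracket formula eq.~\eqref{Section 1 Definition Derived bracket formula} in the coordinates $x^a,p_a,\xi^a,\pi_a$. To this end set $\theta:=\Delta+K_P\in\C{A}$ and $Q:=[\theta,-]$. Both summands are odd; $\Delta=\xi^ap_a$ has weight $3$, while the weight-$(m+2)$ component of $K_P$ comes from the $(m+1)$-vector part of $P$, so $\theta$ has no component of weight $\leq 1$ and hence $P\theta=0$. Thus $Q$ is an inner odd derivation of the Poisson algebra meeting the standing assumptions of Section~2, and it generates a hierarchy of higher Dorfman brackets $d_1,d_2,\dots$ on $\Gamma(\C{T})$ together with the canonical pairing $g$ on $\C{T}=TM\oplus T^*M$.

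The next step is to verify $Q^2=0$, which, $Q$ being inner, amounts to $[\theta,\theta]=0$, and this will in fact be exactly zero. Writing $[\theta,\theta]=[\Delta,\Delta]+2[\Delta,K_P]+[K_P,K_P]$ and noting that these three pieces carry $\varepsilon_1$-degrees $3$, $2$ and $1$ respectively, they must vanish separately. Now $[\Delta,\Delta]=0$ is the identity $d^2=0$ and follows at once from $[\xi^a,\xi^b]=[p_a,p_b]=0$; the vanishing of $[K_P,K_P]$ is equivalent to the homotopy-Poisson condition $[P,P]_{SN}=0$, since the passage from $P$ to $K_P$ — the Hamiltonian lift of the homological vector field $d_P$ on $\Pi T^*M$, transported along the canonical symplectomorphism $T^*\Pi T^*M\cong T^*\Pi TM$ — is an isomorphism of graded Poisson algebras; and $[\Delta,K_P]=0$, the compatibility of the de Rham differential with $d_P$, is checked directly, the contribution of the ``$p$-part'' of $K_P$ cancelling that of its ``$\xi$-part'' against $\Delta=\xi^ap_a$ (alternatively one invokes the $\L$-bialgebroid structure on $(TM,T^*M)$ of \cite{KV08}). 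With $Q^2=0$ the main theorem then yields that $(\Gamma(\C{T}),g,d_1,d_2,\dots)$ is a $D_\infty$-algebra.

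It remains to identify the operations, for which I would substitute $Q=[\Delta,-]+[K_P,-]$ into eq.~\eqref{Section 1 Definition Derived bracket formula} and expand, organising the work by the type signature of the arguments (each $u_i$ either a vector field or a $1$-form) and invoking the dictionary relating iterated Poisson brackets to $\imath,\C{L},d$ on forms and $\imath,\C{L}^*,d_P$ on multivectors established just before the statement, together with the grouped-insertion sign eq.~\eqref{Section 3 Sign appearing from grouped insertion operators}. The weight count carries the structural content. Since $\Delta$ has weight $3$, the term $[\Delta,-]$ survives the projection $P$ only in the binary operation, and there only when the intermediate bracket does not collapse — two $\xi$-type entries Poisson-commute — which is exactly why $d_2(X,Y)=[X,Y]_c$, why $d_2(X,\eta)$ picks up $\C{L}_X\eta$ while $d_2(\eta,X)$ picks up $-(-1)^{X\eta}\imath_Xd\eta$, and why $[\Delta,-]$ contributes nothing to the brackets of $1$-forms. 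Dually, bracketing $[K_P,\chi_{u_1}]$ with the remaining $\chi_{u_i}$ reaches weight $1$ only through the weight-$(n-1)$ component of $K_P$ — equivalently the $n$-vector part of $P$ — in the $n$-ary operation, which produces $-d_P^0X$ in $d_1$, the higher Koszul bracket $[\eta_1,\dots,\eta_n]_{K_P}$ on $1$-forms (holding essentially by the definition of those brackets), and the $d_P^1$- and $\C{L}^*$-corrections in the binary mixed cases. For a mixed $n$-ary bracket with the vector field $X$ in slot $i$, one tracks, over the positions at which $Q$ may be placed and the ways the $1$-form entries are consumed as insertions, which combinations reach weight $1$; the terms linear in $X$ then assemble into $\imath_{\eta_1}\cdots\imath_{\eta_n}d_P^nX$ and $\imath_{\eta_1}\cdots\widehat{\imath_{\eta_j}}\cdots\imath_{\eta_n}\C{L}^n_{\eta_j}X$ with the appropriate Koszul signs.

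The main obstacle is this last piece of bookkeeping: reconciling the sign $(-1)^{(k-1)u_1+\cdots+u_{k-1}}$ of the derived bracket formula, the Koszul signs from the unshuffles implicit in reordering insertions, and the signs carried by the $\chi$-images of $\imath$ and $\C{L}$ — in particular the $(-1)^{nX_1+(n-1)X_2+\cdots}$ of eq.~\eqref{Section 3 Sign appearing from grouped insertion operators} — so that the coefficients $(-1)^{X(\eta_{i+1}+\cdots+\eta_n)+i}(2-i)$ and the inner sign $(-1)^{\eta_j(\eta_{j+1}+\cdots+\eta_n)+j}$ in the mixed formula come out on the nose. The integer factor $2-i$, which records with signs how many of the one-form slots preceding $X$ reorganise past the $d_P$ (so the coefficient is $1$ when $X$ is first, $0$ when second, $-1$ when third, and so on), is the one place it is easy to be off by an additive constant, and should be done honestly rather than by analogy with the purely $1$-form case.
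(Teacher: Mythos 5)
Your overall strategy coincides with the paper's: take $\theta=\Delta+K_P$, use the weight and the bigrading of $\Delta$ (bidegree $2,1$) and of the weight-$(k+1)$ components of $K_P$ (bidegree $1,k$) against the bidegrees $1,0$ of $\chi_\eta$ and $0,1$ of $\chi_X$ to see which derived brackets survive the projection, and then identify the survivors with the known operations. Your explicit check that $Q^2=0$ by splitting $[\theta,\theta]$ into its $\varepsilon_1$-homogeneous pieces $[\Delta,\Delta]$, $[\Delta,K_P]$, $[K_P,K_P]$ is a worthwhile addition that the paper leaves implicit (it only records that $K_P$ is Poisson-commuting), and your identification of the binary brackets and of $d_n(\eta_1,\dots,\eta_n)=[\eta_1,\dots,\eta_n]_{K_P}$ matches the paper, which likewise dismisses these as well-known or definitional.

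The genuine gap is the last formula, which is the only part of the statement that actually requires proof and is where the paper spends essentially all of its effort. You correctly locate the difficulty but then stop at a plan: ``I would substitute \dots and expand,'' followed by an acknowledgement that the coefficient $2-i$ and the signs ``should be done honestly.'' The paper does this honestly: starting from $(-1)^{\epsilon}[\cdots[K_P,\chi_{\eta_1}],\dots,\chi_{\eta_{i-1}}],\chi_X],\chi_{\eta_{i+1}}],\dots,\chi_{\eta_n}]$ it uses the Jacobi identity to move $\chi_X$ innermost, producing one principal term plus $i-1$ correction terms of the form $[\cdots[[K_P,[\chi_{\eta_j},\chi_X]],\dots]$; a second round of Jacobi manipulations converts these to the right-nested form, each correction contributing $-1$ to the coefficient of $[\chi_{\eta_1},\dots,[\chi_{\eta_n},[K_P,\chi_X]]]$ (whence $1-(i-1)=2-i$) together with the terms $[\chi_{\eta_1},\dots,\widehat{\chi_{\eta_j}},\dots,[\chi_{\eta_n},[[K_P,\chi_{\eta_j}],\chi_X]]\cdots]$; and a final reconciliation of $\epsilon$ with the grouped-insertion sign of eq.~\eqref{Section 3 Sign appearing from grouped insertion operators} yields the stated prefactors $(-1)^{X(\eta_{i+1}+\cdots+\eta_n)+i}$ and $(-1)^{\eta_j(\eta_{j+1}+\cdots+\eta_n)+j}$. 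Your heuristic for $2-i$ is consistent with this mechanism, but without carrying out the two Jacobi steps and the sign bookkeeping you have not established the formula, only made it plausible; as written, the proposal proves the easy clauses of the proposition and defers the hard one.
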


\begin{proof}
The function $\Delta$ caries the bigrading $2,1$, and a weight $k+1$ homogeneous component of $K_P$ has bigrading $1,k$. Considering the bigrading $1,0$ of a $1$-form $\chi_\eta$ and $0,1$ of a vector field $\chi_X$ in $\C{A}$, one can count the weights in eq. \eqref{Section 1 Definition Derived bracket formula} and deduce first that the brackets given are the only non-zero brackets. The brackets of two arguments are the Dorfman brackets for an arbitrary Lie bialgebroid and are well-known; the higher Koszul brackets appear from the structure of $T^*M$. The most interesting formula is the last. Consider the expression
    \begin{equation*}
    (-1)^{\epsilon}[\cdots [K_P,\chi_{\eta_1}],\dots,\chi_{\eta_{i-1}}],\chi_X],\chi_{\eta_{i+1}}],\ldots,\chi_{\eta_n}]
    \end{equation*}
where $\epsilon = \eta_1(n-1) + \eta_2(n-2) + \cdots + \eta_{i-1}(n-i+1) + X(n-i) + \eta_{i+1}(n-i-1) + \cdots + \eta_{n-1}$. To save space, introduce the notation $\underbrace{a}_{\delta} = (-1)^\delta a$. By applications of the Jacobi identity, the symmetry of the bracket, and observing when terms are zero, we obtain
    \begin{align*}
    (-1)^{\epsilon}& [\cdots [K_P,\chi_{\eta_1}],\dots,\chi_{\eta_{i-1}}],\chi_X],\chi_{\eta_{i+1}}],\ldots,\chi_{\eta_n}]\\
    & = \underbrace{[\cdots[[K_P,\chi_X],\chi_{\eta_1}],\ldots,\chi_{\eta_n}]}_{\epsilon + \chi_X(\chi_{\eta_1}+\cdots + \chi_{\eta_{i-1}})} + \sum^{i-1}_{j=1}\underbrace{[\cdots[[K,[\chi_{\eta_j},\chi_X]],\chi_{\eta_1}],\ldots,\chi_{\eta_n}]}_{\epsilon + \chi_X(\chi_{\eta_{i-1}} + \cdots + \chi_{\eta_{j+1}}) + (\chi_X+\chi_{\eta_j})(\chi_{\eta_{j-1}} + \cdots + \chi_{\eta_1})}\\
    & = \underbrace{(2-i)[\chi_{\eta_1},\ldots,[\chi_{\eta_n},[K_P,\chi_X]]]}_{\epsilon + n-1 + \chi_X(\chi_{\eta_1}+\cdots + \chi_{\eta_{i-1}}) + (\chi_X + 1)(\chi_{\eta_1} + \cdots + \chi_{\eta_n})}\\
    & \quad + \sum^{i-1}_{j=1}\underbrace{[\chi_{\eta_1},\ldots,\widehat{\chi_{\eta_j}},\ldots,[\chi_{\eta_n},[[K,\chi_{\eta_j}],\chi_X]]\cdots]}_{\epsilon + \chi_X(\chi_{\eta_{i-1}} + \cdots + \chi_{\eta_{j+1}}) + (\chi_X+\chi_{\eta_j})(\chi_{\eta_{j-1}} + \cdots + \chi_{\eta_1}) + n + (1+\chi_{\eta_j} + \chi_X)(\chi_{\eta_1}+\ldots +\hat{j} + \cdots + \chi_{\eta_n})}\\
    & = \underbrace{(2-i)[\chi_{\eta_1},\ldots,[\chi_{\eta_n},[K_P,\chi_X]]]}_{\epsilon + n-1 + \chi_X(\chi_{\eta_{i+1}}+\cdots + \chi_{\eta_n}) + \chi_{\eta_1} + \cdots + \chi_{\eta_n}}\\
    & \quad + \sum^{i-1}_{j=1}\underbrace{[\chi_{\eta_1},\ldots,\widehat{\chi_{\eta_j}},\ldots,[\chi_{\eta_n},[[K,\chi_{\eta_j}],\chi_X]]\cdots]}_{\epsilon + \chi_X(\chi_{\eta_{i+1}} + \cdots + \chi_{\eta_n}) + \chi_{\eta_j}(\chi_{\eta_{j+1}} + \cdots + \chi_{\eta_n}) + n + (\chi_{\eta_1}+\ldots +\hat{j} + \cdots + \chi_{\eta_n})}.
    \end{align*}
It remains to notice that to express this in terms of operators $\C{L}_\eta$ and $\imath_\eta$, we need the correct signs. Specifically, in the first term we need $n + \eta_1+\cdots+\eta_n$ for the $n-1$ insertion operators and the one differential $d_P$. Writing
    \begin{align*}
    & \epsilon + n-1 +  \chi_X(\chi_{\eta_{i+1}}+\cdots + \chi_{\eta_n}) + \chi_{\eta_1} + \cdots + \chi_{\eta_n}\\
    & = \epsilon + n-1 + (X + 1)(\eta_{i+1}+ + \cdots + \eta_n + n-i) + \eta_1 + \cdots + \eta_n + n-1\\
    & = (\epsilon + X(n-i) + \eta_{i+1}+ + \cdots + \eta_n) + (n + \eta_1 + \cdots + \eta_n) + X(\eta_{i+1}+ + \cdots + \eta_n) -i,
    \end{align*}
groups the terms into those altering $\epsilon$ to obtain the required sign to group insertion operators (see eq. \eqref{Section 3 Sign appearing from grouped insertion operators}), those required to write the bracket expression in terms of operators, and those remaining. Similar rearrangements for the second term give the required sign.
\end{proof}

\section{Proof of Main Statement}

\subsection{Polarisation}
For arbitrary elements of the vector space $\Gamma(E)$, showing that the higher Dorfman brackets of eq. \eqref{Section 1 Definition Derived bracket formula} satisfy the homotopy Loday identities is a lengthy procedure. One can significantly reduce the calculations by polarising the identities, which carry all the information along the diagonal. For an arbitrary section $u\in\Gamma(E)$, the identity $J_n(u) = J_n(u,\ldots,u)$ simplifies drastically (after considering the unshuffles). Conversely, the Loday identities on a collection of arbitrary sections can be recovered from $J_n(u)$ as follows.

For sections $u_1,\ldots,u_n\in\Gamma(E)$ and (super)numbers $\lambda^1,\ldots,\lambda^n$, writing $u = \lambda^iu_i$ allows the expression $J_n(u_1,\ldots,u_n)$ to be recovered as the coefficient of the monomial $\lambda^1\lambda^2\cdots\lambda^n$. In fact, the coefficient is not the single expression $J_n(u_1,\ldots,u_n)$, but rather it is proportional to the (skew-)symmetrisation of this, depending on the parity of $u$. Expanding $J_n(u) = J_n(\lambda^iu_i)$ in the monomial $\lambda^1\lambda^2\cdots\lambda^n$,
	\begin{equation*}
	\sum_{\sigma\in S_n}(-1)^{u(sgn(\sigma)) + \kappa(\sigma) + \varepsilon}\lambda^1\ldots\lambda^nJ_n(u_{\sigma(1)},\ldots,u_{\sigma(n)})
	\end{equation*}
where $\varepsilon$ is some sign depending on the parity of $u$, but not on the permutation $\sigma$. If $u$ is required to have even parity (i.e. $\tilde{\lambda}^i = \tilde{u}_i$) then this returns the symmetrisation of the expression $J_n(u_1,\ldots,u_n)$, and hence showing $J_n(u)=0$ for $\tilde{u}=0$ shows that either $J$ is zero, or that it is totally skew-symmetric. They are of course, not skew-symmetric operations due to the symmetry laws that the brackets possess, and hence it it enough to show that $J_n(u)=0$ for even $u$ to show that these identities hold on all sections. (Notice that for odd $u$, $J_n(u)$ are in fact symmetric operations, since $g$ is identically zero on the same odd arguments.)

\subsection{Proof}
We consider the case when $n=2m$. Using the polarisation trick described above, it will be sufficient to consider the $n$th higher Loday identity on a single even section $u\in\Gamma(E)$. In this case eq. \eqref{Section 1 nth Loday identity} reduces to
	\begin{align*}
	J_{2m}(u) = & \sum_{i+j=n+1}\sum_{k\geq j}\sum_{\sigma\in Sh(k-j,j-1)} sgn(\sigma)(-1)^{(k+1)(j+1)}\\
	& \qquad d_i(u_{\sigma(1)},\ldots,u_{\sigma(k-j)},d_j(u_{\sigma(k-j+1)},\ldots,u_{\sigma(k-1)},u_k),u_{k+1},\ldots,u_{i+j-1}),
	\end{align*}
where we then use the numbers $\C{C}(k,j)$ to denote the number of terms remaining after counting the signed unshuffles (see the appendix),
	\begin{equation*}
	= \sum_{i+j=2m+1}\sum^{2m}_{k= j}\C{C}(k,j)(-1)^{(k+1)(j+1)}d_i(\underbrace{u,\ldots,u}_{k-j},d_j(u,\ldots,u),\underbrace{u,\ldots,u}_{2m-k}).
	\end{equation*}
Now divide this summation into four; for $j=2p$, $j=2q+1$, write
	\begin{align*}
	& = \sum^m_{p=1}\sum^{2m}_{k= 2p}\C{C}(k,2p)(-1)^{k+1}d_i(\underbrace{u,\ldots,u}_{k-2p},d_{2p}(u,\ldots,u),\underbrace{u,\ldots,u}_{2m-k})\\
	&\quad + \sum^{m-1}_{q=0}\sum^{2m}_{k= 2q+1}\C{C}(k,2q+1)(-1)^{k}d_i(\underbrace{u,\ldots,u}_{k-2q-1},d_{2q+1}(u,\ldots,u),\underbrace{u,\ldots,u}_{2m-k}),
	\end{align*}
and again for $k$, write $k = 2r$ for the summation over $p$, and $k=2u+1$, $k=2v$ for that over $q$,
	\begin{align*}
	& = -\sum^m_{p=1}\sum^{m}_{r=p}\left(\begin{array}{c}r-1\\p-1\end{array}\right)d_i(\underbrace{u,\ldots,u}_{2(r-p)},d_{2p}(u,\ldots,u),\underbrace{u,\ldots,u}_{2(m-r)})\\
	&\qquad\quad + \sum^{m-1}_{q=0}\sum^{m-1}_{u=q}\left(\begin{array}{c}u\\q\end{array}\right)d_i(\underbrace{u,\ldots,u}_{2(u-q)},d_{2q+1}(u,\ldots,u),\underbrace{u,\ldots,u}_{2(m-u)-1})\\
	& \qquad\qquad \quad +\sum^{m-1}_{q=0}\sum^{m}_{v=q+1}\left(\begin{array}{c}v-1\\q\end{array}\right)d_i(\underbrace{u,\ldots,u}_{2(v-q)-1},d_{2q+1}(u,\ldots,u),\underbrace{u,\ldots,u}_{2(m-v)}).
	\end{align*}
The terms involving even $j$ and odd $k$ cancel from the sign of the shuffle, and so make no appearance in the expression. See the appendix for the source of the binomial coefficients appearing from the other unshuffles. By definition of the operations $d_k$, we translate the expression for the higher Loday identity to the Poisson algebra $\C{A}$ where we have at our disposal the Jacobi identity. In order to simplify the expressions, let us write $\xi = \chi_u$ where $\tilde{\xi}=1$, and
    \begin{equation*}
    \Phi^k_Q = [\cdots[Q\underbrace{\xi,\xi],\ldots,\xi}_{k  times}],\quad \Phi^0_Q =\theta,\quad\mbox{ and }\quad \chi_l = \chi_{d_l(u,\ldots,u)}.
    \end{equation*}
Separating out the terms $r=m$ and $v=m$, we obtain
    \begin{align*}
	& = -\sum^{m-1}_{p=1}\sum^{m-1}_{r=p}\left(\begin{array}{c}r-1\\p-1\end{array}\right)\left(P[\cdots[\Phi^{2(r-p)}_Q,\chi_{2p}],\underbrace{\xi],\ldots,\xi}_{2p}] - [P[\cdots[\Phi^{2(r-p)}_Q,\chi_{2p}],\underbrace{\xi],\ldots,\xi}_{2p}]\right)\\
	& \qquad -\sum^{m-1}_{p=1}\left(\begin{array}{c}m-1\\p-1\end{array}\right)\left(P[\Phi^{2(m-p)}_Q,\chi_{2p}] - [P\Phi^{2(m-p)}_Q,\chi_{2p}]\right)\\
	&- \sum^{m-1}_{q=0}\sum^{m-1}_{u=q}\left(\begin{array}{c}u\\q\end{array}\right)\left(P[\cdots[\Phi^{2(u-q)}_Q,\chi_{2q+1}],\underbrace{\xi],\ldots,\xi}_{2q+1}] - [P[\cdots[\Phi^{2(u-q)}_Q,\chi_{2q+1}],\underbrace{\xi],\ldots,\xi}_{2q+1}]\right)\\
	&+\sum^{m-2}_{q=0}\sum^{m-1}_{v=q+1}\left(\begin{array}{c}v-1\\q\end{array}\right)\left(P[\cdots[\Phi^{2(v-q)-1}_Q,\chi_{2q+1}],\underbrace{\xi],\ldots,\xi}_{2v}] - [P[\cdots[\Phi^{2(v-q)-1}_Q,\chi_{2q+1}],\underbrace{\xi],\ldots,\xi}_{2v}]\right)\\
	& \qquad + \sum^{m-1}_{q=0}\left(\begin{array}{c}m-1\\q\end{array}\right)\left(P[\Phi^{2(m-q)-1}_Q,\chi_{2q+1}] - [P\Phi^{2(m-q)-1}_Q,\chi_{2q+1}]\right),
	\end{align*}
where the sign $(-1)^{j(2m-k)}$ arises from eq. \eqref{Section 1 Definition Derived bracket formula} appearing only in the summation over $u$.

\begin{lemma}\label{Section Proof Lemma shuffling 2 xi}
For $\Phi^k_Q = [\cdots[Q\xi,\xi],\ldots,\xi]$ and $\tilde{\xi} =1$,
	\begin{equation*}
	[[[\Phi^k_Q,\chi_l],\xi],\xi] = [\Phi^{k+2}_Q,\chi_l],
	\end{equation*}
for any values of $k,l$.
\end{lemma}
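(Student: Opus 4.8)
The plan is to establish the identity
$$[[[\Phi^k_Q,\chi_l],\xi],\xi] = [\Phi^{k+2}_Q,\chi_l]$$
by a direct manipulation using the Jacobi identity, exploiting the vanishing of $[\xi,\xi]$ and the symmetry of the Poisson bracket, together with the fact that $\chi_l$ is a linear function (so $[\chi_l,\xi]\in\C{A}^0$ is central).

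First I would recall the two facts that make the computation work: since $\tilde\xi = 1$, the Poisson bracket $[\xi,\xi]$ is a bracket of an odd element with itself, and by the graded symmetry of the weight $-2$ bracket (which is even as an operation after the parity shift, so that $[a,b] = -(-1)^{(a+1)(b+1)}[b,a]$) one gets $[\xi,\xi]=0$; and since $\chi_l = \chi_{d_l(u,\dots,u)}$ lies in $\C{A}^1$, the bracket $[\chi_l,\xi]$ lies in $\C{A}^0$, the trivial Poisson subalgebra, hence $[[\chi_l,\xi],-] = 0$. Then I would start from the left-hand side and push the innermost $\xi$ inward through $\chi_l$ using the Jacobi identity: schematically, $[[\Phi^k_Q,\chi_l],\xi] = [[\Phi^k_Q,\xi],\chi_l] \pm [\Phi^k_Q,[\chi_l,\xi]]$, and the second summand dies because $[\chi_l,\xi]$ is central. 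This moves one $\xi$ past $\chi_l$ and onto $\Phi^k_Q$, turning $[\Phi^k_Q,\xi]$ into (up to sign) $\Phi^{k+1}_Q$. Repeating once more with the outer $\xi$ — again commuting it past $\chi_l$ at the cost only of a central term which vanishes — produces $[\Phi^{k+2}_Q,\chi_l]$, modulo a sign that I must check collapses to $+1$.

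The sign bookkeeping is the only real content, so I would do it carefully. The relevant parities are $\tilde\xi = 1$, $\tilde\chi_l \equiv l \pmod 2$ (these operations carry parity $k\bmod 2$, and $\tilde\chi_{d_l} = \tilde{d_l(u,\dots,u)}+1 = l+1$), and $\widetilde{\Phi^k_Q} \equiv k+1 \pmod 2$ (since $Q$ is odd and each bracket with $\xi$ adds $1$ to the parity: $Q\xi$ has parity $\tilde\xi + 1 = 0$, so $\Phi^0_Q$-type... actually $\widetilde{\Phi^k_Q} = \tilde\theta$-dependence aside, $\widetilde{Q\xi} = 0$ and each further $[-,\xi]$ adds $\tilde\xi+1 = 0$, so in fact $\Phi^k_Q$ is always even — I would pin this down from $\widetilde{\Phi^k_Q} = \widetilde{Q\xi} = \tilde\xi + 1 + 1 = 1$; care needed). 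With these parities fixed, each application of Jacobi contributes a sign $(-1)^{(\widetilde{\chi_l})(\tilde\xi)}$ or similar from swapping $\chi_l$ past $\xi$, and each rewriting $[\,\cdot\,,\xi]\mapsto\Phi^{\bullet+1}$ is sign-free by definition of $\Phi$. I expect the two swaps to contribute a combined sign $(-1)^{2\tilde\chi_l} = 1$, but the odd/even split of $\Phi^k_Q$ must be tracked to be sure no stray $(-1)^k$ survives.

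The main obstacle, then, is not conceptual but purely the sign computation: confirming that the parities of $\Phi^k_Q$, $\chi_l$, and $\xi$ combine so that both the "commute past $\chi_l$" steps and the "absorb $\xi$ into $\Phi$" steps leave no residual sign, independently of the parities of $k$ and $l$ — which is exactly why the lemma is stated "for any values of $k,l$." Everything else follows immediately once $[\xi,\xi]=0$ and the centrality of $[\chi_l,\xi]$ are in hand.
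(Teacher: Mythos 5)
Your overall strategy (iterate the Jacobi identity to walk the two copies of $\xi$ past $\chi_l$ and absorb them into $\Phi^k_Q$) is the same as the paper's, and the final identity is true, but the justification you give for discarding the cross terms has a genuine error. You claim that $[\chi_l,\xi]\in\C{A}^0$ is central, so that $[\Phi^k_Q,[\chi_l,\xi]]$ and its relatives vanish. This is false: $\C{A}^0$ is a \emph{trivial} Poisson subalgebra in the sense that the bracket restricted to it vanishes (indeed $[\C{A}^0,\C{A}^0]$ and $[\C{A}^0,\C{A}^1]$ land in negative weight and are zero), but $\C{A}^0=\ci(M)$ is not central in $\C{A}$. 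The bracket has weight $-2$, so $[\C{A}^k,\C{A}^0]\subset\C{A}^{k-2}$ is generically nonzero for $k\geq 2$ --- concretely $[X,f]=X(f)$ for a weight-$2$ vector field $X$. Since $\Phi^k_Q$ has components of weight $\geq 2$ for a general inhomogeneous $\theta$, the terms $[\Phi^k_Q,[\chi_l,\xi]]$ and $[\Phi^{k+1}_Q,[\chi_l,\xi]]$ you drop are individually nonzero, and the lemma is claimed for all $k,l$, so you cannot appeal to weight to kill them.

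What actually rescues the computation --- and what the paper's proof does --- is a \emph{pairwise} cancellation of these cross terms rather than a termwise vanishing. The first Jacobi application leaves the unwanted summand $[[\Phi^k_Q,[\chi_l,\xi]],\xi]$; applying Jacobi once more to it and using the genuine vanishing $[[\chi_l,\xi],\xi]=0$ (this one \emph{is} a weight count: $[\C{A}^0,\C{A}^1]\subset\C{A}^{-1}=0$) rewrites it as $\pm[\Phi^{k+1}_Q,[\chi_l,\xi]]$, which then cancels, sign for sign, against the cross term produced by the second Jacobi application. Your proof is missing exactly this step. A smaller point: your preliminary claim $[\xi,\xi]=0$ is also wrong --- the Poisson bracket here is even (it encodes the even symmetric form via $g(u,v)=(-1)^u[\chi_u,\chi_v]$), so an odd element need not self-annihilate, and indeed $[\xi,\xi]=[\chi_u,\chi_u]=g(u,u)$ is generically nonzero for even $u$. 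Fortunately you never actually use that claim, so it is only a red herring, but the centrality claim is load-bearing and must be replaced by the cancellation argument above.
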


\begin{proof}
By applications of the Jacobi identity and the skew-symmetry of the Poisson bracket,
	\begin{align*}
	& [[[\Phi^k_Q,\chi_l],\xi],\xi] = [[\Phi^k_Q,[\chi_l,\xi]],\xi] -(-1)^{\chi_l\Phi^k_Q}[[\chi_l,\Phi^{k+1}_Q],\xi]\\
&\quad  = (-1)^{\chi_l(\Phi^k_Q+\Phi^{k+1}_Q)}[\Phi^{k+1}_Q,[\chi_l,\xi]]-(-1)^{\Phi^k_Q(\chi_l+1)}[[\chi_l,\xi],\Phi^{k+1}_Q] - (-1)^{\Phi^k_Q\chi_l}[\chi_l,\Phi^{k+2}]\\
& \quad \quad = [\Phi^{k+2}_Q,\chi_l].
	\end{align*}
\end{proof}

Repeated applications of Lemma \ref{Section Proof Lemma shuffling 2 xi} and the binomial identities
    \begin{gather*}
	\sum^{m-1}_{r=p}\left(\begin{array}{c}r-1\\p-1\end{array}\right) = \sum^{m-1}_{r=p}\left(\left(\begin{array}{c}r\\p\end{array}\right) - \left(\begin{array}{c}r-1\\p\end{array}\right)\right) = \cdots = \left(\begin{array}{c}m-1\\p\end{array}\right),\\
	\sum^{m-1}_{u=q}\left(\begin{array}{c}u\\q\end{array}\right) = \left(\begin{array}{c}m\\q+1\end{array}\right),\quad \sum^{m-1}_{v=q+1}\left(\begin{array}{c}v-1\\q\end{array}\right) = \left(\begin{array}{c}m-1\\q+1\end{array}\right),
	\end{gather*}
allow the auxiliary indices $r,u,v$ to be eliminated from the summation. In doing so, we obtain the following expression for the $(2m)$th higher Loday identity in terms of the Poisson structure on $\C{A}$,
	\begin{align}\nonumber
	& = -\sum^{m-1}_{p=1}\left(\begin{array}{c}m-1\\p\end{array}\right)\left(P[\Phi^{2(m-p)}_Q,\chi_{2p}] - [P[[\Phi^{2(m-p-1)}_Q,\chi_{2p}],\xi],\xi]\right)\\\nonumber
	& \quad\qquad\qquad\qquad\qquad \qquad -\sum^{m}_{p=1}\left(\begin{array}{c}m-1\\p-1\end{array}\right)\left(P[\Phi^{2(m-p)}_Q,\chi_{2p}] - [P\Phi^{2(m-p)}_Q,\chi_{2p}]\right)\\\label{Section 2 Entire expression for Loday identity}
	&- \sum^{m-1}_{q=0}\left(\begin{array}{c}m\\q+1\end{array}\right)\left(P[[\Phi^{2(m-q-1)}_Q,\chi_{2q+1}],\xi] - [P[\Phi^{2(m-q-1)}_Q,\chi_{2q+1}],\xi]\right)\\\nonumber
	&\qquad \qquad+\sum^{m-2}_{q=0}\left(\begin{array}{c}m-1\\q+1\end{array}\right)\left(P[\Phi^{2(m-q)-1}_Q,\chi_{2q+1}] - [P[[\Phi^{2(m-q)-3}_Q,\chi_{2q+1}],\xi],\xi]\right)\\\nonumber
	& \quad\quad\qquad\qquad \qquad + \sum^{m-1}_{q=0}\left(\begin{array}{c}m-1\\q\end{array}\right)\left(P[\Phi^{2(m-q)-1}_Q,\chi_{2q+1}] - [P\Phi^{2(m-q)-1}_Q,\chi_{2q+1}]\right).
	\end{align}
It remains to write
	\begin{equation*}
	\chi_k = P\Phi^k_Q - [P\Phi^{k-1}_Q,\xi].
	\end{equation*}
Notice that the resulting sum will split into two: those terms whose projector lies on the outside of the expression, generating the weight $1$ terms, and those where the projector cannot be moved outside by manipulating the expressions, removing the weight $0$ terms. We split this into two and treat each part individually.

\subsubsection{The Outer Projector Terms}
 After substituting for $\chi_k$ in eq. \eqref{Section 2 Entire expression for Loday identity}, the terms containing the projector on the outside of the bracket, or those which can be manipulated to move it outside, are as follows
 	\begin{align}\nonumber
 	& -\sum^{m-1}_{p=1}\left(\begin{array}{c}m-1\\p\end{array}\right)\left(P[\Phi^{2(m-p)}_Q,P\Phi^{2p}_Q] \underbrace{- P[\Phi^{2(m-p)}_Q,[P\Phi^{2p-1}_Q,\xi]]}\right)\\\nonumber
 	& -\sum^{m}_{p=1}\left(\begin{array}{c}m-1\\p-1\end{array}\right)\left(P[\Phi^{2(m-p)}_Q,P\Phi^{2p}_Q] \underbrace{- P[\Phi^{2(m-p)}_Q,[P\Phi^{2p-1}_Q,\xi]]} - [P\Phi^{2(m-p)}_Q,P\Phi^{2p}_Q]\right)\\\label{Section 2 Outer projector terms large expression}
 	& - \sum^{m-1}_{q=0}\left(\begin{array}{c}m\\q+1\end{array}\right)\left(P[[\Phi^{2(m-q-1)}_Q,P\Phi^{2q+1}_Q],\xi] - P[[\Phi^{2(m-q-1)}_Q,[P\Phi^{2q}_Q,\xi]],\xi]\right)\\\nonumber
 	& + \sum^{m-2}_{q=0}\left(\begin{array}{c}m-1\\q+1\end{array}\right)\left(P[\Phi^{2(m-q)-1}_Q,P\Phi^{2q+1}_Q] - P[\Phi^{2(m-q)-1}_Q,[P\Phi^{2q}_Q,\xi]]\right)\\\nonumber
 	& + \sum^{m-1}_{q=0}\left(\begin{array}{c}m-1\\q\end{array}\right)\left(P[\Phi^{2(m-q)-1}_Q,P\Phi^{2q+1}_Q] - P[\Phi^{2(m-q)-1}_Q,[P\Phi^{2q}_Q,\xi]] - [P\Phi^{2(m-q)-1}_Q,P\Phi^{2q+1}_Q]\right).
 	\end{align}
By applying the Jacobi identity, the skew-symmetry of the bracket, and shifting index, we can rewrite
    \begin{gather}\nonumber
    - \sum^{m-1}_{q=0}\left(\begin{array}{c}m\\q+1\end{array}\right)\left(P[[\Phi^{2(m-q-1)}_Q,P\Phi^{2q+1}_Q],\xi] - P[[\Phi^{2(m-q-1)}_Q,[P\Phi^{2q}_Q,\xi]],\xi]\right)\\\label{Section 2 arranging m q+1 terms}
    \qquad = \sum^{m-1}_{q=0}\left(\begin{array}{c}m\\q+1\end{array}\right)\left(P[\Phi^{2(m-q)-1}_Q,[P\Phi^{2q}_Q,\xi]] - P[\Phi^{2(m-q)-1}_Q,P\Phi^{2q+1}_Q]\right)\\\nonumber
    - \sum^m_{p=1}\left(\begin{array}{c}m\\p\end{array}\right)P[\Phi^{2(m-p)}_Q,[P\Phi^{2p-1}_Q,\xi]].
    \end{gather}
Combining the two braced terms in eq. \eqref{Section 2 Outer projector terms large expression} via the identity
    \begin{equation*}
    \left(\begin{array}{c}m-1\\p\end{array}\right) + \left(\begin{array}{c}m-1\\p-1\end{array}\right) = \left(\begin{array}{c}m\\p\end{array}\right),
    \end{equation*}
we obtain
    \begin{equation}
    \sum^m_{p=1}\left(\begin{array}{c}m\\p\end{array}\right)P[\Phi^{2(m-p)}_Q,[P\Phi^{2p-1}_Q,\xi]],
    \end{equation}
which cancels with the second summation in eq. \eqref{Section 2 arranging m q+1 terms} leaving only the first.

Similarly we may combine the terms containing $\left(\begin{array}{c}m-1\\q+1\end{array}\right)$ in eq. \eqref{Section 2 Outer projector terms large expression} with those remaining in eq. \eqref{Section 2 arranging m q+1 terms} (after stripping out $q = m-1$ from the sum), to give
    \begin{gather}\nonumber
    \sum^{m-2}_{q=0}\left(\begin{array}{c}m-1\\q\end{array}\right)\left(P[\Phi^{2(m-q)-1}_Q,[P\Phi^{2q}_Q,\xi]] - P[\Phi^{2(m-q)-1}_Q,P\Phi^{2q+1}_Q]\right)\\\nonumber + P[\Phi^1,[P\Phi^{2(m-1)}_Q,\xi]] - P[\Phi^1_Q,P\Phi^{2m-1}_Q]\\\label{Section 2 the m-1 q terms which will cancel}
    = \sum^{m-1}_{q=0}\left(\begin{array}{c}m-1\\q\end{array}\right)\left(P[\Phi^{2(m-q)-1}_Q,[P\Phi^{2q}_Q,\xi]] - P[\Phi^{2(m-q)-1}_Q,P\Phi^{2q+1}_Q]\right).
    \end{gather}
Observe that the terms in eq. \eqref{Section 2 the m-1 q terms which will cancel} cancel precisely with those first and second terms in eq.  \eqref{Section 2 Outer projector terms large expression} containing $\left(\begin{array}{c}m-1\\q\end{array}\right)$, and hence after this cancellation eq. \eqref{Section 2 Outer projector terms large expression} reduces to
    \begin{gather}\label{Section 2 Reduced expression for the outer terms}
    -\sum^{m-1}_{p=1}\left(\begin{array}{c}m-1\\p\end{array}\right)P[\Phi^{2(m-p)}_Q,P\Phi^{2p}_Q] - \sum^{m}_{p=1}\left(\begin{array}{c}m-1\\p-1\end{array}\right)P[\Phi^{2(m-p)}_Q,P\Phi^{2p}_Q]\\\nonumber + \sum^{m}_{p=1}\left(\begin{array}{c}m-1\\p-1\end{array}\right)[P\Phi^{2(m-p)}_Q,P\Phi^{2p}_Q] - \sum^{m-1}_{q=0}\left(\begin{array}{c}m-1\\q\end{array}\right)[P\Phi^{2(m-q)-1}_Q,P\Phi^{2q+1}_Q].
    \end{gather}
We now use the distributivity of the projector $P$ (eq. \eqref{Section 1 Distr Law for projection}), and write
    \begin{gather*}
    - \sum^{m}_{p=1}\left(\begin{array}{c}m-1\\p-1\end{array}\right)P[\Phi^{2(m-p)}_Q,P\Phi^{2p}_Q] + \sum^{m}_{p=1}\left(\begin{array}{c}m-1\\p-1\end{array}\right)[P\Phi^{2(m-p)}_Q,P\Phi^{2p}_Q]\\
    = \sum^{m}_{p=1}\left(\begin{array}{c}m-1\\p-1\end{array}\right)P[P\Phi^{2(m-p)}_Q,\Phi^{2p}_Q] - \sum^{m}_{p=1}\left(\begin{array}{c}m-1\\p-1\end{array}\right)P[\Phi^{2(m-p)}_Q,\Phi^{2p}_Q],
    \end{gather*}
for the summation over $p$, together with
    \begin{gather*}
    - \sum^{m-1}_{q=0}\left(\begin{array}{c}m-1\\q\end{array}\right)[P\Phi^{2(m-q)-1}_Q,P\Phi^{2q+1}_Q] = \sum^{m-1}_{q=0}\left(\begin{array}{c}m-1\\q\end{array}\right)P[\Phi^{2(m-q)-1}_Q,\Phi^{2q+1}_Q]\\
    -\sum^{m-1}_{q=0}\left(\begin{array}{c}m-1\\q\end{array}\right)P[P\Phi^{2(m-q)-1}_Q,\Phi^{2q+1}_Q] - \sum^{m-1}_{q=0}\left(\begin{array}{c}m-1\\q\end{array}\right)P[\Phi^{2(m-q)-1}_Q,P\Phi^{2q+1}_Q],
    \end{gather*}
for that over $q$.  A shift in index shows that the final two terms of this expression cancel identically, and eq. \eqref{Section 2 Reduced expression for the outer terms} becomes
    \begin{gather*}
    -\sum^{m-1}_{p=1}\left(\begin{array}{c}m-1\\p\end{array}\right)P[\Phi^{2(m-p)}_Q,P\Phi^{2p}_Q] - \sum^{m}_{p=1}\left(\begin{array}{c}m-1\\p-1\end{array}\right)P[\Phi^{2(m-p)}_Q,\Phi^{2p}_Q]\\ + \underbrace{\sum^{m}_{p=1}\left(\begin{array}{c}m-1\\p-1\end{array}\right)P[P\Phi^{2(m-p)}_Q,\Phi^{2p}_Q]} + \sum^{m-1}_{q=0}\left(\begin{array}{c}m-1\\q\end{array}\right)P[\Phi^{2(m-q)-1}_Q,\Phi^{2q+1}_Q].
    \end{gather*}
If one sets $s = m-p$ in the braced term, the symmetry of the bracket negates the first term directly above it (notice that the term $s=0$ is removed by the projection condition $P\Phi^0 = 0$), and we are left with
    \begin{equation}\label{Section 2 Very reduced outer term expression}
    = \sum^{m-1}_{q=0}\left(\begin{array}{c}m-1\\q\end{array}\right)P[\Phi^{2(m-q)-1}_Q,\Phi^{2q+1}_Q] - \sum^{m}_{p=1}\left(\begin{array}{c}m-1\\p-1\end{array}\right)P[\Phi^{2(m-p)}_Q,\Phi^{2p}_Q].
    \end{equation}
Changing index $p = r+1$ in the second term and writing $\Phi^{2(r+1)}_Q = [\Phi^{2r+1}_Q,\xi]$ allows the use of the Jacobi identity, which provides the required term to remove the first in eq. \eqref{Section 2 Very reduced outer term expression}. The term remaining is
    \begin{equation}\label{Section 2 End equation for outer terms}
    = -\sum^{m-1}_{r=0}\left(\begin{array}{c}m-1\\r\end{array}\right)P[[\Phi^{2(m-r-1)}_Q,\Phi^{2r+1}_Q],\xi].
    \end{equation}

Finally, it remains to move $Q$ from one term in $\Phi$ and push it into the second, obtaining $P\Phi^{2m}_{Q^2}$ which generates the derived bracket of $Q^2$. Write eq. \eqref{Section 2 End equation for outer terms} as
	\begin{equation*}
	- P[Q\Phi^{2m-1}_Q,\xi] - \sum^{m-2}_{q=0}\left(\begin{array}{c}m-1\\q\end{array}\right)P[[\Phi^{2(m-q-1)}_Q,\Phi^{2q+1}_Q],\xi],
	\end{equation*}
where $Q$ appears as $\Phi^0$, and consider only the first term $P[Q\Phi^{2m-1}_Q,\xi]$. Applications of the Jacobi identity give
	\begin{gather}\nonumber
	P[Q\Phi^{2m-1}_Q,\xi] = P\Phi^{2m}_{Q^2} - \sum^{2m-1}_{k=2}(-1)^k[\cdots[\Phi^{2m-k}_Q,\Phi^1_Q],\underbrace{\xi],\ldots,\xi}_{k-1}]\\\label{Section 2 Identity for the q squared bracket cancellation}
	= P\Phi^{2m}_{Q^2}  -\sum^{m-1}_{p=1}[\cdots[\Phi^{2(m-p)}_Q,\Phi^1_Q],\underbrace{\xi],\ldots,\xi}_{2p-1}] + \sum^{m-1}_{q=1}[\cdots[\Phi^{2(m-q)-1}_Q,\Phi^1_Q],\underbrace{\xi],\ldots,\xi}_{2q}]
	\end{gather}
upon splitting the sum into even and odd $k$.

\begin{lemma}\label{Section 2 lemma for q squared}
The term $\Phi^k_Q$ satisfies
	\begin{equation*}
	[[[\Phi^k_Q,\Phi^l_Q],\xi],\xi] = [\Phi^k_Q,\Phi^{l+2}_Q] + [\Phi^{k+2}_Q,\Phi^{l}_Q],
	\end{equation*}
and more generally,
	\begin{equation*}
	[\cdots[\Phi^k_Q,\Phi^l_Q],\underbrace{\xi],\ldots,\xi}_{2r}] = \sum^{r}_{s=0}\left(\begin{array}{c}r\\s\end{array}\right)[\Phi^{k+2s}_Q,\Phi^{l+2(r-s)}_Q].
	\end{equation*}
\end{lemma}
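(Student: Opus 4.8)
The plan is to recognise the left-hand side as $\ad_\xi^{2r}$ applied to $[\Phi^k_Q,\Phi^l_Q]$, where $\ad_\xi := [\,\cdot\,,\xi]$ on $\C{A}$, and to exploit that, because $\tilde\xi = 1$, this operator squares to an \emph{inner} derivation. The first step is the identity
\begin{equation*}
[[a,\xi],\xi] = \tfrac12\,[a,[\xi,\xi]] \qquad \text{for every } a\in\C{A}.
\end{equation*}
It comes out of a single application of the graded Jacobi identity to $[\xi,[a,\xi]]$ together with graded skew-symmetry: expanding $[\xi,[a,\xi]]$ by Jacobi produces $[[\xi,a],\xi] + (-1)^a[a,[\xi,\xi]]$, while skew-symmetry (and $\tilde\xi=1$) rewrites both $[[\xi,a],\xi]$ and $[\xi,[a,\xi]]$ as $\pm[[a,\xi],\xi]$ with \emph{opposite} signs, so collecting terms isolates $\tfrac12[a,[\xi,\xi]]$. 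Set $G := [\xi,\xi]$; by eq. \eqref{Section 1 chi relating the form g to the bracket of functions} this is $g(u,u)$, an \emph{even}, weight-zero function — not zero in general, since $u$ is even. The identity above reads $\ad_\xi^2 = \tfrac12\,\ad_G$ as operators, hence $\ad_\xi^{2t} = 2^{-t}\ad_G^{\,t}$ for all $t\ge 0$; and since $\ad_\xi\Phi^m_Q = \Phi^{m+1}_Q$ by definition, $\Phi^{m+2t}_Q = \ad_\xi^{2t}\Phi^m_Q = 2^{-t}\ad_G^{\,t}\Phi^m_Q$.

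The second step observes that, $G$ being even, $\ad_G = [\,\cdot\,,G]$ is an (even, sign-free) derivation of the Poisson bracket, again by Jacobi and skew-symmetry, so $\ad_G^{\,r}[a,b] = \sum_{s=0}^r\binom{r}{s}[\ad_G^{\,s}a,\ad_G^{\,r-s}b]$ by the usual Leibniz/binomial expansion for iterated derivations. Taking $a = \Phi^k_Q$, $b = \Phi^l_Q$, substituting $\ad_\xi^{2r} = 2^{-r}\ad_G^{\,r}$ on the left and $2^{-s}\ad_G^{\,s}\Phi^k_Q = \Phi^{k+2s}_Q$, $2^{-(r-s)}\ad_G^{\,r-s}\Phi^l_Q = \Phi^{l+2(r-s)}_Q$ on the right, the powers of $2$ match termwise and one reads off
\begin{equation*}
[\cdots[\Phi^k_Q,\Phi^l_Q],\underbrace{\xi],\ldots,\xi}_{2r}] = \ad_\xi^{2r}[\Phi^k_Q,\Phi^l_Q] = \sum_{s=0}^r\binom{r}{s}\,[\Phi^{k+2s}_Q,\Phi^{l+2(r-s)}_Q],
\end{equation*}
the displayed first identity of the lemma being the special case $r=1$. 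Alternatively, staying closer to the style of Lemma \ref{Section Proof Lemma shuffling 2 xi}, one proves the $r=1$ case by applying Jacobi twice to $[[[\Phi^k_Q,\Phi^l_Q],\xi],\xi]$ — the four resulting terms split into $[\Phi^k_Q,\Phi^{l+2}_Q]+[\Phi^{k+2}_Q,\Phi^l_Q]$ plus two ``mixed'' terms of the form $[\Phi^{k+1}_Q,\Phi^{l+1}_Q]$ which cancel by skew-symmetry — and then inducts on $r$, using the $r=1$ case on each summand and Pascal's rule $\binom{r-1}{s}+\binom{r-1}{s-1}=\binom{r}{s}$.

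I expect essentially all the content to sit in the first step: once one sees that the odd parity of $\xi$ forces $\ad_\xi^2$ to be the inner derivation $\tfrac12\,\ad_{[\xi,\xi]}$, the forbidding sign bookkeeping of a $2r$-fold nested bracket evaporates and what remains is just the binomial theorem for iterated derivations. The only place genuine care with signs is needed is verifying the normalising constant $\tfrac12$ in that first identity (and, on the inductive route, the cancellation of the mixed terms in the $r=1$ case).
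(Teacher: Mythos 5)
Your proof is correct. Bear in mind that the paper does not actually supply an argument here --- the lemma is dismissed with ``the proof is an exercise in the Jacobi identity and we omit it'' --- so there is nothing to match line by line; the intended route is presumably the direct one you sketch at the end (a double application of the Jacobi identity for $r=1$, in which the two mixed terms $[\Phi^{k+1}_Q,\Phi^{l+1}_Q]$ cancel, followed by induction on $r$ via Pascal's rule, exactly in the style of Lemma~\ref{Section Proof Lemma shuffling 2 xi}). Your primary route is genuinely different and, to my mind, cleaner. The key identity $[[a,\xi],\xi]=\tfrac12[a,[\xi,\xi]]$ checks out with the paper's conventions: writing $[[a,\xi],\xi]=(-1)^{a}[\xi,[a,\xi]]$, expanding by Jacobi, and converting $[[\xi,a],\xi]$ back gives $[[a,\xi],\xi]=-[[a,\xi],\xi]+[a,[\xi,\xi]]$, whence the factor $\tfrac12$. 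Thus $\ad_\xi^2=\tfrac12\ad_G$ with $G=[\xi,\xi]=g(u,u)$ an even function of weight zero, $\ad_G$ is a sign-free derivation of the Poisson bracket, and the general statement is the binomial Leibniz rule for iterated derivations, with the powers of $2$ cancelling termwise exactly as you say. What this buys is that all of the $\mathbb{Z}_2$-sign bookkeeping of a $2r$-fold nested bracket is concentrated in a single two-line computation; as a pleasant bonus, the same identity re-derives Lemma~\ref{Section Proof Lemma shuffling 2 xi}, since $[\chi_l,G]=0$ for weight reasons. The only point that genuinely required care --- the normalising constant $\tfrac12$ and the fact that $[\xi,\xi]$ need not vanish for odd $\xi$ --- you have handled correctly.
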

The proof is an exercise in the Jacobi identity and we omit it. Applying the result of Lemma \ref{Section 2 lemma for q squared} to eq. \eqref{Section 2 Identity for the q squared bracket cancellation}, gives
	\begin{align}\nonumber
	 = P\Phi^{2m}_{Q^2} - & \sum^{m-1}_{p=1}\sum^{p-1}_{r=0}\left(\begin{array}{c}p-1\\r\end{array}\right)P[[\Phi^{2(m-p)+2r}_Q,\Phi^{1+2(p-r-1)}_Q],\xi]\\\nonumber
	& \qquad  + \sum^{m-1}_{q=1}\sum^{q-1}_{s=0}\left(\begin{array}{c}q-1\\s\end{array}\right)P[[[\Phi^{2(m-q)+2s-1}_Q,\Phi^{1+2(q-s-1)}_Q],\xi],\xi]\\\label{Section 2 Almost final reference}
	 = P\Phi^{2m}_{Q^2}  + & \sum^{m-1}_{q=1}\sum^{q-1}_{s=0}\left(\begin{array}{c}q-1\\s\end{array}\right)P[[\Phi^{2(m-q+s)-1}_Q,\Phi^{2(q-s)}_Q],\xi],
	\end{align}
after an application of the Jacobi identity to the summation over $s$, which cancels that in $r$.

Considering the sum eq. \eqref{Section 2 End equation for outer terms} with the terms in eq. \eqref{Section 2 Almost final reference}, the remaining terms are
	\begin{align*}
	- P\Phi^{2m}_{Q^2} - \sum^{m-1}_{q=1}\sum^{q-1}_{s=0}\left(\begin{array}{c}q-1\\s\end{array}\right)&P[[\Phi^{2(m-q+s)-1}_Q,\Phi^{2(q-s)}_Q],\xi]\\ & - \sum^{m-2}_{q=0}\left(\begin{array}{c}m-1\\q\end{array}\right)P[[\Phi^{2(m-q-1)}_Q,\Phi^{2q+1}_Q],\xi].
	\end{align*}
Setting $t = q-s$ in the first summation,
	\begin{align*}
	\sum^{m-1}_{q=1}\sum^{q-1}_{s=0}&\left(\begin{array}{c}q-1\\s\end{array}\right)P[[\Phi^{2(m-q+s)-1}_Q,\Phi^{2(q-s)}_Q],\xi]\\
	& = -\sum^{m-1}_{q=1}\sum^{q}_{t=1}\left(\begin{array}{c}q-1\\t\end{array}\right)P[[\Phi^{2t}_Q,\Phi^{2(m-t)-1}_Q],\xi]\\
	&\qquad\qquad  = -\sum^{m-1}_{t=1}\sum^{m-1}_{q=t}\left(\begin{array}{c}q-1\\t-1\end{array}\right)P[[\Phi^{2t}_Q,\Phi^{2(m-t)-1}_Q],\xi]\\
	& \qquad\qquad\qquad \qquad = -\sum^{m-1}_{t=1}\left(\begin{array}{c}m-1\\t\end{array}\right)P[[\Phi^{2t}_Q,\Phi^{2(m-t)-1}_Q],\xi],
	\end{align*}
for which a final change of index shows that the only term remaining is the term involving $Q^2$,
	\begin{equation*}
	= - P\Phi^{2m}_{Q^2}.
	\end{equation*}

\subsubsection{The Inner Projector terms} In a similar fashion to the terms containing the projector $P$ on the outside of the expression, we collect and rearrange those terms whose projector lies on the inside to obtain the result. These are
	\begin{align*}
	& = \sum^{m-1}_{p=1}\left(\left(\begin{array}{c}m-1\\p\end{array}\right)[P[[\Phi^{2(m-p-1)}_Q,P\Phi^{2p}_Q],\xi],\xi] - [P[[\Phi^{2(m-p-1)}_Q,[P\Phi^{2p-1}_Q,\xi]],\xi],\xi]\right)\\
	& \qquad- \sum^{m-1}_{p=1}\left(\begin{array}{c}m-1\\p-1\end{array}\right)[P\Phi^{2(m-p)}_Q,[P\Phi^{2p-1}_Q,\xi]]\\
	& + \sum^{m-1}_{q=0}\left(\left(\begin{array}{c}m\\q+1\end{array}\right)[P[\Phi^{2(m-q-1)}_Q,P\Phi^{2q+1}_Q],\xi] - [P[\Phi^{2(m-q-1)}_Q,[P\Phi^{2q}_Q,\xi]],\xi]\right)\\
	& - \sum^{m-2}_{q=0}\left(\left(\begin{array}{c}m-1\\q+1\end{array}\right)[P[[\Phi^{2(m-q)-3}_Q,P\Phi^{2q+1}_Q],\xi],\xi] - [P[[\Phi^{2(m-q)-3}_Q,[P\Phi^{2q}_Q,\xi]],\xi],\xi]\right)\\
	& \qquad+\sum^{m-1}_{q=1}\left(\begin{array}{c}m-1\\q\end{array}\right)[P\Phi^{2(m-q)-1}_Q,[P\Phi^{2q+1}_Q,\xi]].
	\end{align*}
The manipulation of these terms follows closely those with the projector on the outside, and to reduce what are already long calculations we will not repeat these. We claim that by repeating similar steps,  we can reduce this long expression to
	\begin{equation*}
	[PQ\Phi^{2m-1}_Q,\xi] + \sum^{m-1}_{q=0}\left(\begin{array}{c}m-1\\q\end{array}\right)[P[\Phi^{2(m-q-1)}_Q,\Phi^{2q+1}_Q],\xi].
	\end{equation*}
This expression is almost identical to that obtained from eq. \eqref{Section 2 End equation for outer terms}, and following that which follows eq. \eqref{Section 2 End equation for outer terms}, this reduces to
	\begin{equation*}
	[P\Phi^{2m-1}_{Q^2},\xi].
	\end{equation*}
Thus the entire higher Loday identity is equal to
	\begin{equation*}
	\chi(J_n(u)) = - P\Phi^{2m}_{Q^2} + [P\Phi^{2m-1}_{Q^2},\xi],
	\end{equation*}
which is precisely (the negative of) the $n$th bracket for the derivation $Q^2$.

These calculations show that the result holds for even $n$. The case for the odd $n$ is analogous, and involves altering the summation indices together with some of the signs. It is not very beneficial to repeat the case of odd $n$, and we leave it as an exercise for the interested reader.

\section*{Appendix: Counting Signed Unshuffles}
After the use of the polarisation trick for even elements in the higher Loday identities eq. \eqref{Section 1 nth Loday identity}, we are left with calculating the sum of the signed unshuffles of identical elements. Due to the presence of the sign, many of these cancel, and we are left with a combinatorial problem of how to calculate the number of remaining terms. It proves rather difficult to track down such a problem in the literature, and so we use the Appendix to provide a self contained proof.

Let $\{x_1,\ldots,x_k\}$ be a set of elements labeled $1$ to $k$, and let $1\leq j\leq k$ be a fixed integer. For our purposes, we wish to fix $x_k$, and consider the set of $(k-1,j-1)$-unshuffles: permutations $\sigma\in S_{k-1}$ such that
	\begin{equation*}
	x_{\sigma(1)}<\cdots < x_{\sigma(k-j)},\qquad x_{\sigma(k-j+1)}<\cdots < x_{\sigma(k-1)},
	\end{equation*}
preserving the numerical order of the labels of the first $k-j$ elements and the final $j-1$ elements. We will write the permutation as
	\begin{equation*}
	x_1,\ldots, x_k \mapsto (-1)^{sgn(\sigma)}x_{\sigma(1)},\ldots, x_{\sigma(k-j)},(x_{\sigma(k-j+1)},\ldots,x_{\sigma(k-1)},x_k),
	\end{equation*}
where $sgn(\sigma)$ is $0$ or $1$ depending on the parity of the permutation (even if it is composed of an even number of transpositions and so on). 

Write the binomial coefficient as
	\begin{equation*}
	C^n_m := \left(\begin{array}{c}n\\m\end{array}\right).
	\end{equation*}
It is easily deduced that the total number of $(k-1,j-1)$ unshuffles is given by $C^{k-1}_{j-1}$. Our question is how many of these unshuffles are even (as permutations), how many are odd, and what is the difference. Let
	\begin{equation*}
	\C{C}(k,j) = \# E C^{k-1}_{j-1} - \# O C^{k-1}_{j-1}
	\end{equation*}
denote the difference between the number of even unshuffles and the number of odd unshuffles. A priori, all we can say is that $0\leq |\C{C}(k,j)|\leq C^{k-1}_{j-1}$, but in fact $0\leq \C{C}(k,j)\leq C^{k-1}_{j-1}$ follows after a little investigation.

\begin{theorem}
The number $\C{C}(k,j)$ of even unshuffles minus those which are odd is given by the following:
	\begin{align*}
	C^{p-1}_{q-1} & \qquad   k=2p, j = 2q, \\ 0\quad & \qquad k=2p+1, j=2q,\\ C^{p-1}_q & \qquad k=2p, j=2q+1,\\ C^p_q\,\,\, & \qquad k=2p+1, j=2q+1.
	\end{align*}
\end{theorem}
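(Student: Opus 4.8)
The plan is to compute $\C{C}(k,j)$ by setting up a sign-weighted count of $(k-1,j-1)$-unshuffles and evaluating it combinatorially. First I would reformulate the quantity: an unshuffle $\sigma$ of $\{x_1,\dots,x_{k-1}\}$ into a block of size $k-j$ and a block of size $j-1$ is determined by the choice of which labels $S \subseteq \{1,\dots,k-1\}$, $|S| = k-j$, go into the first block (the rest, in order, into the second). The sign $sgn(\sigma)$ is then the parity of the number of inversions created, which equals the number of pairs $(a,b)$ with $a \in S$, $b \notin S$, and $a > b$. So I want
\[
\C{C}(k,j) = \sum_{\substack{S \subseteq \{1,\dots,k-1\}\\ |S| = k-j}} (-1)^{\,\#\{(a,b)\,:\,a\in S,\ b\notin S,\ a>b\}}.
\]
The clean way to evaluate this is via a generating function / $q$-binomial specialisation: the standard identity $\sum_{S} q^{\operatorname{inv}(S)} = \binom{k-1}{j-1}_q$, the Gaussian binomial coefficient, where the sum is over $(k-j)$-subsets and $\operatorname{inv}(S)$ is exactly the inversion statistic above. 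Hence $\C{C}(k,j) = \binom{k-1}{j-1}_q\big|_{q=-1}$.

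The key step is then the evaluation of the Gaussian binomial at $q=-1$, which is a classical result (sometimes attributed to the ``$q=-1$ phenomenon''):
\[
\binom{n}{m}_{-1} = \begin{cases} \binom{\lfloor n/2\rfloor}{\lfloor m/2\rfloor} & \text{if it is not the case that $n$ is even and $m$ is odd},\\[4pt] 0 & \text{if $n$ is even and $m$ is odd}.\end{cases}
\]
I would prove this by induction using the $q$-Pascal recurrence $\binom{n}{m}_q = \binom{n-1}{m-1}_q + q^{m}\binom{n-1}{m}_q$, specialised at $q=-1$, splitting into the four parity cases for $(n,m)$; the recurrence at $q=-1$ reads $\binom{n}{m}_{-1} = \binom{n-1}{m-1}_{-1} + (-1)^m \binom{n-1}{m}_{-1}$, and each parity case collapses to an instance of ordinary Pascal $\binom{a}{b} = \binom{a-1}{b-1}+\binom{a-1}{b}$ after discarding the vanishing terms. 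Substituting $n = k-1$, $m = j-1$ and translating the parities of $n,m$ back to parities of $k,j$ then yields exactly the four cases in the statement: $k=2p,j=2q$ gives $n=2p-1$ odd, $m=2q-1$ odd, so $\binom{p-1}{q-1}$; $k=2p+1,j=2q$ gives $n=2p$ even, $m=2q-1$ odd, so $0$; $k=2p,j=2q+1$ gives $n=2p-1$ odd, $m=2q$ even, so $\binom{p-1}{q}$; and $k=2p+1,j=2q+1$ gives $n=2p$ even, $m=2q$ even, so $\binom{p}{q}$.

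The main obstacle is purely bookkeeping: making sure the inversion statistic used in the derived-bracket unshuffle count matches the one in the Gaussian binomial identity (in particular, that we fix the last element $x_k$ and only permute the first $k-1$, so the relevant count is $\binom{k-1}{j-1}_q$ and not $\binom{k}{j}_q$), and that "even unshuffle" corresponds to $sgn(\sigma)=0$ in the paper's convention. An alternative, self-contained route that avoids invoking the $q$-binomial literature is to prove the recurrence $\C{C}(k,j) = \C{C}(k-1,j-1) + (-1)^{?}\,\C{C}(k-1,j)$ directly, by conditioning on whether the largest label $k-1$ lies in the first or second block of the unshuffle and tracking how many elements it must jump past (contributing a sign $(-1)^{j-1}$ in one case); this recurrence together with the obvious base cases $\C{C}(1,1)=1$ and $\C{C}(k,1)=\C{C}(k,k)=1$ determines $\C{C}$ and the claimed closed form is verified by induction on $k$ through the same four-parity-case analysis. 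I would present the $q$-binomial argument as the main line and remark on the direct recurrence as the elementary alternative.
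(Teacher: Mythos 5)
Your proposal is correct, and it reaches the theorem by a genuinely different route from the paper. The paper conditions on which element $x_r$ appears \emph{first} in the unshuffle (which forces $x_1,\dots,x_{r-1}$ into the second block), obtaining a $j$-term ``Hockey Stick'' decomposition $C^{k-1}_{j-1}=\sum_{r=1}^{j}C^{k-1-r}_{k-j-1}$ with the sign correction $(-1)^{(r-1)(k-j)}$, and then runs a four-parity-case induction that needs all smaller values of $\C{C}$ simultaneously. You instead identify $\C{C}(k,j)$ with the Gaussian binomial $\binom{k-1}{j-1}_q$ evaluated at $q=-1$ via the inversion statistic on $(k-j)$-subsets of $\{1,\dots,k-1\}$ (your bookkeeping is right: within each block the word is increasing, so every inversion pairs an element of $S$ with a smaller element of its complement, and the total is $\binom{k-1}{k-j}_q=\binom{k-1}{j-1}_q$), and then invoke or reprove the classical $q=-1$ evaluation through the two-term $q$-Pascal recurrence $\binom{n}{m}_{-1}=\binom{n-1}{m-1}_{-1}+(-1)^m\binom{n-1}{m}_{-1}$; your elementary alternative, conditioning on whether the largest label $k-1$ ends the first or the second block, yields exactly this recurrence with sign $(-1)^{j-1}$ and so is the direct combinatorial shadow of the same identity. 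The translation $n=k-1$, $m=j-1$ reproduces all four cases of the theorem, as you check. What each approach buys: the paper's argument is self-contained and stays entirely inside the unshuffle picture, at the cost of a longer induction whose step mixes several prior cases; yours is shorter, its two-term recurrence makes the parity case analysis essentially mechanical, and it situates the result inside the well-known $q=-1$ phenomenon for Gaussian binomials, which would also make the statement easier to cite or generalise.
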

We see that these numbers depend strongly on the parity carried by $k$ and $j$, and in only one case do the number of even and odd permutations coincide. These can be reasonably checked for $k$ up to $9$ or $10$, simply by writing down the possible shuffles and determining their parity. This is useful, since they are required for the inductive procedure needed to prove the result, whose inductive step depends on the knowledge of all previous numbers.

\begin{proof}
Let us fix an arbitrary $k$ and $j$. The total number of unshuffles $C^{k-1}_{j-1}$ can be decomposed by fixing certain elements, and offers a combinatorial interpretation of the so called Hockey Stick identity for binomial coefficients. To begin, the total number of unshuffles is equal to the sum of the number of unshuffles fixing $x_1$ as the first element,  with the number of those which do not. Let us fix $x_1$ as the first term:
	\begin{equation*}
	x_1,x_{\sigma(2)},\ldots, x_{\sigma(k-j)},(x_{\sigma(k-j+1)},\ldots,x_{\sigma(k-1)},x_k).
	\end{equation*}
There are $k-2$ free elements, and $k-j-1$ available pairings, hence there are $C^{k-2}_{k-j-1}$ terms with $x_1$ arising as the first term. Consider then those beginning with the element $x_2$.	 Notice that if $x_2$ is fixed as the first term, $x_1$ must appear in the $j$-partition, specifically as the first term,
	\begin{equation*}
	x_2,x_{\sigma(3)},\ldots, x_{\sigma(k-j+1)},(x_1,x_{\sigma(k-j+2)},\ldots,x_{\sigma(k-1)},x_k).
	\end{equation*}
This leaves $k-3$ free elements, and still $k-j-1$ available options. Hence there are $C^{k-3}_{k-j-1}$ unshuffles containing $x_2$ as the first term. Now fix $x_r$ as the first term with $1\leq r\leq j$. This forces $x_1,\ldots,x_{r-1}$ into the $j$-partition, preserving the order, and thus ensures that no term $x_i$ with $i\geq j+1$ appears as the first term in the unshuffle. Similar reasoning then shows that there are $C^{k-1-r}_{k-j-1}$ unshuffles with $x_r$ fixed as the first term, and hence
	\begin{equation*}
	C^{k-1}_{j-1} = C^{k-2}_{k-j-1} + C^{k-3}_{k-j-1} + \cdots + C^{k-j-1}_{k-j-1} = \sum^j_{r=1}C^{k-1-r}_{k-j-1},
	\end{equation*}
recovering the well-known identity.
	
Let us now turn to those shuffles which are even or odd. Given a $(k-1,j-1)$ unshuffle $\sigma$,
	\begin{align*}
	x_1,& \ldots, x_k\\
	& = (-1)^{sgn(\sigma)}x_r,x_{\sigma(r+1)},\ldots,x_{\sigma(k-j+r-1)},x_1,\ldots,x_{r-1},x_{\sigma(k-j+r)},\ldots,x_{\sigma(k-1)},x_k\\
	& = (-1)^{(r-1)(k-j)+sgn(\sigma)}x_1,\ldots, x_r, x_{\sigma(r+1)},\ldots,x_{\sigma(k-1)},x_k,
	\end{align*}
where the second equality is obtained from the first by moving the $r-1$ elements $x_1,\ldots,x_{r-1}$ past $k-j$ elements, using $(r-1)(k-j)$ transpositions. The parity of the unshuffle $\sigma$ fixing $x_r$ as the first element then can be obtained directly from the parity of the unshuffle  which permutes only $k-r-1$ elements. Specifically, depending on $k,j,r$, the larger even unshuffles can be recovered from the smaller even and odd. We now work case by case.

Fix $k=2p$, $j=2q$, and suppose that the theorem holds for all $k$ and $j$ less than these values. Then
	\begin{equation*}
	\C{C}(2p,2q) = \# E C^{2p-1}_{2q-1} - \# O C^{2p-1}_{2q-1} = \# E \sum^{2q}_{r=1}C^{2p-1-r}_{2(p-q)-1} - \# O \sum^{2q}_{r=1}C^{2p-1-r}_{2(p-q)-1}.
	\end{equation*}
Since $k-j = 2(p-q)$ is even, the number of even/ odd permutations of the sum is equal to the sum of the number of even/ odd permutations of each individual part. Hence
	\begin{align*}
	\C{C}(2p,2q) & = \sum^{2q}_{r=1}\C{C}(2p-r,2(p-q))\\
	& \quad = \sum^q_{u=1} \C{C}(2(p-u),2(p-q)) + \sum^{q-1}_{v=0} \C{C}(2(p-v-1)+1,2(p-q))\\
	&  \quad \quad = \sum^q_{u=1} C^{p-u-1}_{p-q-1} = C^{p-1}_{p-q} = C^{p-1}_{q-1},
	\end{align*}
where we use the first results of the theorem as assumed for induction for even $j$.
	
The proof for the case $k=2p+1$, $j=2q+1$ is similar, since $k-j$ remains even. We find that 
	\begin{align*}
	\C{C}(2p+1,2q+1) & = \sum^q_{u=1} \C{C}(2(p-u)+1,2(p-q)) + \sum^{q}_{v=0} \C{C}(2(p-v)+1,2(p-q))\\
	& \quad = \sum^{q}_{v=0} \C{C}(2(p-v)+1,2(p-q)) = C^{p}_{p-q} = C^p_q,
	\end{align*}	
where we assume the vanishing of the terms in the summation over $u$.
	
The difference comes in the terms where $k-j$ is odd, and so the parity of the shuffles obtained from the lower terms changes according to the value of $r$. Fix $k= 2p+1$, $j=2q$. Then
	\begin{equation*}
	\# E C^{2p}_{2q-1} = \# E \sum^{2q}_{r-1}C^{2p-r}_{2(p-q)}.
	\end{equation*}
If $r$ is odd then $(r-1)(k-j)$ is even, and so the even shuffles of $C^{2p}_{2q-1}$ are given by the even shuffles of the terms in the sum in odd $r$. If $r$ is even however, the parity of the corresponding permutation shifts, and those even shuffles are given by the odd ones. We obtain
	\begin{align*}
	\C{C}(2p+1,2q) &  =\sum^{2q}_{r=1}(-1)^{r+1}\C{C}(2p+1-r,2(p-q)+1)
	\end{align*}
Splitting around even and odd $r$, and using the hypothesis for those lower terms in odd $j$, we see that this vanishes identically. The final case for $k=2p$, $j=2q+1$ follows similar reasoning, and we recover
	\begin{equation*}
	\C{C}(2p,2q+1) = C^{p-1}_{q}.
	\end{equation*}
\end{proof}

\subsection*{Acknowledgments}
I am very grateful to Ted Voronov who first suggested that such homotopy structure should exist incorporating the Dorfman bracket, and for the many discussions, both related and unrelated, over the past years.

\bibliographystyle{alpha}
\bibliography{bibliographybypickles}

\begin{thebibliography}{LWX97}

\bibitem[AP10]{AP10}
M.~Ammar and N.~Poncin.
\newblock Coalgebraic approach to the {L}oday infinity category, stem
  differential for {$2n$}-ary graded and homotopy algebras.
\newblock {\em Ann. Inst. Fourier (Grenoble)}, 60(1), 2010.

\bibitem[Ban15]{Ban15}
R.~Bandiera.
\newblock Nonabelian higher derived brackets.
\newblock {\em J. Pure Appl. Algebra}, 219(8):3292--3313, 2015.

\bibitem[Ber07]{Ber07}
K.~Bering.
\newblock Non-commutative {B}atalin-{V}ilkovisky algebras, homotopy {L}ie
  algebras and the {C}ourant bracket.
\newblock {\em Comm. Math. Phys.}, 274(2):297--341, 2007.

\bibitem[Fuc82]{Fuc82}
B.~Fuchssteiner.
\newblock The {L}ie algebra structure of degenerate {H}amiltonian and
  bi-{H}amiltonian systems.
\newblock {\em Progr. Theoret. Phys.}, 68(4):1082--1104, 1982.

\bibitem[KLS05]{KLS05}
P.O. Kazinski, S.L. Lyakhovich, and A.A. Sharapov.
\newblock Lagrange structure and quantization.
\newblock {\em J. High Energy Phys.}, (7):076, 42, 2005.

\bibitem[Kos85]{Kos85}
Jean-Louis Koszul.
\newblock Crochet de {S}chouten-{N}ijenhuis et cohomologie.
\newblock {\em Ast\'erisque}, (Num\'ero Hors S\'erie):257--271, 1985.
\newblock The mathematical heritage of \'Elie Cartan (Lyon, 1984).

\bibitem[KS96]{Kos96}
Y.~Kosmann-Schwarzbach.
\newblock From {P}oisson algebras to {G}erstenhaber algebras.
\newblock {\em Ann. Inst. Fourier (Grenoble)}, 46(5):1243--1274, 1996.

\bibitem[KS04]{Kos04}
Y.~Kosmann-Schwarzbach.
\newblock Derived brackets.
\newblock {\em Lett. Math. Phys.}, 69:61--87, 2004.

\bibitem[KV08]{KV08}
H.M. Khudaverdian and Th.Th. Voronov.
\newblock Higher {P}oisson brackets and differential forms.
\newblock In {\em Geometric methods in physics}, volume 1079 of {\em AIP Conf.
  Proc.}, pages 203--215. Amer. Inst. Phys., Melville, NY, 2008.

\bibitem[KW15]{KW15}
F.~Keller and S.~Waldmann.
\newblock Deformation theory of {C}ourant algebroids via the {R}othstein
  algebra.
\newblock {\em J. Pure Appl. Algebra}, 219(8):3391--3426, 2015.

\bibitem[Lod93]{Lod93}
J.-L. Loday.
\newblock Une version non commutative des alg\`ebres de {L}ie: les alg\`ebres
  de {L}eibniz.
\newblock In {\em R.{C}.{P}.\ 25, {V}ol.\ 44 ({F}rench) ({S}trasbourg, 1992)},
  volume 1993/41 of {\em Pr\'epubl. Inst. Rech. Math. Av.}, pages 127--151.
  Univ. Louis Pasteur, Strasbourg, 1993.

\bibitem[LPS17]{LPS16}
S.L. Lyakhovich, M.~Peddie, and A.A. Sharapov.
\newblock Lifting a weak {P}oisson bracket to the algebra of forms.
\newblock {\em J. Geom. Phys.}, (116):330 -- 344, 2017.

\bibitem[LWX97]{LWX97}
Z.~Liu, A.~Weinstein, and P.~Xu.
\newblock Manin triples for {L}ie bialgebroids.
\newblock {\em J. Differential Geom.}, 45(3):547--574, 1997.

\bibitem[Rot91]{Rot91}
M.~Rothstein.
\newblock The structure of supersymplectic supermanifolds.
\newblock In {\em Differential geometric methods in theoretical physics
  ({R}apallo, 1990)}, volume 375 of {\em Lecture Notes in Phys.}, pages
  331--343. Springer, Berlin, 1991.

\bibitem[Roy99]{Roy99}
R.~Roytenberg.
\newblock {\em Courant algebroids, derived brackets and even symplectic
  supermanifolds}.
\newblock PhD thesis, University of California, Berkeley, 1999.

\bibitem[Roy02]{Roy02}
D.~Roytenberg.
\newblock On the structure of graded symplectic supermanifolds and {C}ourant
  algebroids.
\newblock In {\em Quantization, {P}oisson brackets and beyond ({M}anchester,
  2001)}, volume 315 of {\em Contemp. Math.}, pages 169--185. Amer. Math. Soc.,
  Providence, RI, 2002.

\bibitem[RW98]{RW98}
D.~Roytenberg and A.~Weinstein.
\newblock Courant algebroids and strongly homotopy {L}ie algebras.
\newblock {\em Lett. Math. Phys.}, 46(1):81--93, 1998.

\bibitem[Sch40]{Sch40}
J.A. Schouten.
\newblock {\"U}ber differentialkomitanten zweier kontravarianter gr{\"o}ssen.
\newblock In {\em Nederl. Akad. Wetensch., Proc.}, volume~43, pages 449--452,
  1940.

\bibitem[Uch11]{Uch11}
K.~Uchino.
\newblock Derived brackets and sh {L}eibniz algebras.
\newblock {\em J. Pure Appl. Algebra}, 215(5):1102--1111, 2011.

\bibitem[Vor02]{Vor02}
Th.Th. Voronov.
\newblock Graded manifolds and {D}rinfel'd doubles for {L}ie bialgebroids.
\newblock In {\em Quantization, {P}oisson brackets and beyond ({M}anchester,
  2001)}, volume 315 of {\em Contemp. Math.}, pages 131--168. Amer. Math. Soc.,
  Providence, RI, 2002.

\bibitem[Vor05a]{Vor05b}
Th.Th. Voronov.
\newblock Higher derived brackets and homotopy algebras.
\newblock {\em J. Pure Appl. Algebra}, 202(1-3):133--153, 2005.

\bibitem[Vor05b]{Vor05a}
Th.Th. Voronov.
\newblock Higher derived brackets for arbitrary derivations.
\newblock {\em Travaux mathematiques. {F}asc. {XVI}}, Trav. Math.,
  XVI:163--186, 2005.

\bibitem[Zam12]{Zam12}
M.~Zambon.
\newblock {$L_\infty$}-algebras and higher analogues of {D}irac structures and
  {C}ourant algebroids.
\newblock {\em J. Symplectic Geom.}, 10(4):563--599, 2012.

\end{thebibliography}

\end{document}